\newtheorem{theorem}{Theorem}[section]
\newtheorem{corollary}[theorem]{Corollary}
\newtheorem{lemma}[theorem]{Lemma}
\theoremstyle{definition}
\newtheorem{definition}[theorem]{Definition}
\newtheorem{remark}[theorem]{Remark}
\newtheorem{example}{Example}
\title {Effective Kinetics of Chemical Reaction Networks}
\author{Tomoharu Suda}
\address{RIKEN Center for Sustainable Resource Science, RIKEN, Wako, Japan}
\email{tomoharu.suda@riken.jp}
\begin{document}
\maketitle
\begin{abstract}
Chemical reaction network theory is a powerful framework to describe and analyze chemical systems. While much about the concentration profile in an equilibrium state can be determined in terms of the graph structure, the overall reaction's time evolution depends on the network's kinetic rate function. In this article, we consider the problem of the effective kinetics of a chemical reaction network regarded as a conversion system from the feeding species to products. We define the notion of effective kinetics as a partial solution of a system of non-autonomous ordinary differential equations determined from a chemical reaction network. Examples of actual calculations include the Michaelis-Menten mechanism, for which it is confirmed that our notion of effective kinetics yields the classical formula. Further, we introduce the notion of straight-line solutions of non-autonomous ordinary differential equations to formalize the situation where a well-defined reaction rate exists and consider its relation with the quasi-stationary state approximation used in microkinetics. Our considerations here give a unified framework to formulate the reaction rate of chemical reaction networks.
\end{abstract}
\section{Introduction}
Chemical reaction network theory is a useful framework to describe chemical systems \cite{yu2018mathematical, feinberg2019foundations}. Once constituting reactions are specified, we may use it to deduce the dynamical properties of the system. Remarkably, knowing the topological properties of the network is often sufficient for determining the long-time behavior, and detailed knowledge of the reaction speed is not required \cite{feinberg1987chemical}. Here, we note that long-time behavior studied in the chemical reaction network theory is largely independent of time scale. This is because it is defined in terms of the limiting behavior of a dynamical system, which is actually an infinite-time behavior and, therefore, invariant under the rescaling of time. 

The question of kinetics is different. Here, the main concern is finite-time behavior, which is heavily dependent on the time scale and, therefore, beyond the scope of usual dynamical systems theory. Further, obtaining a mathematically well-defined notion of kinetics itself is not very straightforward. Here we recall the standard definition of the reaction rate, which is given via the time differential of the extent of the reaction, which is in turn usually defined only for a single reaction \cite{mcnaught1997compendium, Laidler1996}.
Chemical reaction networks represent composite reactions, and therefore, it is not obvious how the notion of reaction rate can be applied. In addition, we notice that there is much ambiguity as to what reaction is elementary. 

That said, the reaction rate for composite reactions has been considered for various chemical reaction networks, including the classical example of Michaelis-Menten kinetics \cite{houston2012chemical, cornish2013fundamentals}. Also in \cite{toth2023}, the notion of the reaction extent is generalized for composite reactions. The derivation for effective rate laws for the change of concentration usually involves the quasi-stationary state approximation (QSSA), where some of the reaction processes are assumed to be in a stationary state \cite{stiefenhofer1998quasi, turanyi1993error}. We notice that the principle working here is to regard a subnetwork consisting of intermediate steps as a black box conversion system and find the rate of conversion. The purpose of the present article is to propose a mathematical framework that generalizes this procedure of abstraction. The concept of kinetics can be defined as the conversion rate of one set of species to another by a chemical system. This is experimentally observable and,  in the case of elementary reactions, coincides with the mathematical notion of reaction rate. This is what we call effective kinetics here.

Naturally, our approach here is similar in spirit to the classical notion of QSSA and the reduction schemes of chemical reaction networks, in particular, the recent theory of chemical reaction circuits \cite{hangos2021structural, hirono2021structural, avanzini2022circuit}. The main difference is that we do not suppose the existence of a steady state. In this view, the kinetics of an effective reaction may depend not only on the concentration profile of input species but also on how they are supplied. This is most prominent in the case of the delay effect, which has been considered in connection with the signal processing theory and biochemical oscillators \cite{samoilov2002signal, moles2015delay, novak2008design}. When a system has a long chain of reactions, it takes some time for an input to have an observable effect on the output. Even when the reactant species are supplied so that the concentration is kept constant, the overall reaction may not have a well-defined rate. For example, if the system exhibits an oscillatory behavior, we cannot expect the existence of such a rate. That said, as we will see later, our notion of effective kinetics coincides with QSSA under some circumstances.

The rest of the article is organized as follows. First, in Section \ref{prelim}, we provide definitions of basic concepts to fix notation. Section \ref{eff} contains the definition of effective kinetics and calculation examples for simple systems, including the Michaelis-Menten mechanism.  Section \ref{rate} introduces the notion of straight-line solutions of an ordinary differential equation, which is useful in describing the problem of the existence of reaction rate. Finally, in Section \ref{skew}, we consider how our framework relates to the classical QSSA. Section \ref{conc} includes concluding remarks.

\section{Preliminaries}\label{prelim}
In this section, we introduce some basic terminology to fix notation.
\begin{definition}
Let $A=\{a_1, a_2, \cdots , a_N\}$ be a finite set. The set of all formal sums of the elements of $A$ with natural number coefficients is denoted by
\[
	\mathbb{N}^A := \{n _1 a_1+ n_2 a_2 + \cdots + n_N a_N \mid n_1, n_2, \cdots n_N \in \mathbb{N}\}.
\]
The set of all formal sums of the elements of $A$ with real number coefficients is denoted by $\mathbb{R}^A.$
\end{definition}

\begin{definition}[Reaction Networks]
A \textbf{chemical reaction network} is a quadruple $(\mathcal{S}, \mathcal{R}, \vb{s},\vb{t})$ such that
\begin{enumerate}
	\item $\mathcal{S}$ and $\mathcal{R}$ are finite sets.
	\item $\vb{s}$ and $\vb{t}$ are maps $\vb{s},\vb{t}: \mathcal{R} \to \mathbb{N}^\mathcal{S}$.
\end{enumerate}
Here $\mathcal{S}$ is the set of species, and $\mathcal{R}$ is the set of reactions. Each reaction $\rho \in \mathcal{R}$ has the form \ce{\vb{s}(\rho) -> \vb{t}(\rho)}. For each $\rho \in \mathcal{R}$, $\vb{s}(\rho)$ and $\vb{t}(\rho)$ are called \textbf{complexes}. By abuse of notation, we denote by $\vb{s}(\rho)$ the set of species $\alpha$ with a non-zero coefficient in $\vb{s}(\rho)$. The case for $\vb{t}(\rho)$ is similar.
\end{definition}
Kinetics of the chemical reaction networks are described using ordinary differential equations (ODEs). An important distinction is that of autonomous and non-autonomous equations. While most chemical reaction networks are considered isolated systems, which are modeled by autonomous equations, this is not the case for our situation of open chemical networks where there can be interactions between the outside and inside of systems.
\begin{definition}[Autonomous and non-autonomous ordinary differential equations]
Let $f_1, \cdots, f_n$ be functions on $\mathbb{R}^n$.
An equation of the form
\[
	\dv{x_i}{t} = f_i(x_1, x_2, \cdots, x_n) \qq\, i=1,2, \cdots, n
\]
is called an \textbf{autonomous ordinary differential equation}. If  $f_1, \cdots, f_n$ are functions on $\mathbb{R}\times \mathbb{R}^n$,
an equation of the form
\[
	\dv{x_i}{t} = f_i(t, x_1, x_2, \cdots, x_n)\qq\, i=1,2, \cdots, n
\]
is called a \textbf{non-autonomous ordinary differential equation}. The problem of finding functions $x_i(t)$ ($i=1,2,\cdots, n$) which satisfy the given initial condition $x_i(t_0) = a_i$ is called the \textbf{initial value problem} (IVP).
\end{definition}

In what follows, $C(X, Y)$ denotes the set of all continuous maps from $X$ to $Y$ where $X$ and $Y$ are assumed to be metric spaces.

\begin{definition}[Kinetic reaction networks]
A \textbf{kinetic chemical reaction network} is a quintuple $(\mathcal{S}, \mathcal{R}, \vb{s}, \vb{t}, \mathcal{K})$ such that
\begin{enumerate}
	\item $(\mathcal{S}, \mathcal{R}, \vb{s}, \vb{t})$ is a chemical reaction network.
	\item $\mathcal{K}$ is a map $\mathcal{R} \to C(\mathbb{R}^\mathcal{S}, \mathbb{R}_{\geq 0})$.
\end{enumerate}
\end{definition}
The dynamics of a kinetic reaction network is given by the following ODE:

\begin{equation}\label{eqn_crn}
	\dv{\vb{x}}{t} = \sum_{\rho \in \mathcal{R}} \mathcal{K}(\rho)(\vb{x}) \qty(\vb{t}(\rho) - \vb{s}(\rho)),
\end{equation}
where $\vb{x} \in \mathbb{R}^\mathcal{S}$ denotes the concentration vector of species. Here we note that equation (\ref{eqn_crn}) defines a dynamical system on $\mathbb{R}_{\geq 0}^\mathcal{S}$.

Given an autonomous ODE, we may modify the equation so that its response to external inputs is considered. 
\begin{definition}
Let 
\[
 \dv{x_i}{t} = f_i(x_1, x_2, \cdots, x_n) \qq\, i=1,2, \cdots, n
\]
 be an autonomous ODE. For an input function $\vb{I}: \mathbb{R} \to \mathbb{R}^k,$ the induced system is the non-autonomous ODE given by
 \[
 	\dv{x_i}{t} = f_i\qty(x_1, x_2, \cdots, x_{n-k}, I_1(t), I_2(t), \cdots, I_k(t)) \qq\, i=1,2, \cdots, n-k.
 \]

We define the \textbf{response} $R[\vb{I}; \vb{a}]:\mathbb{R} \to \mathbb{R}^n$ of the system under the input $\vb{I}$ by
 \[
 	R[\vb{I}; \mathbf{a}](t) := \qty(x_1(t), x_2(t), \cdots, x_n(t)),
 \]
 where $R[\vb{I}; \mathbf{a}](t)$ is the solution for IVP of these equations under the initial condition $\vb{x}(0) = \vb{a}$.
We call an index $i$ is \textbf{terminal} if $I_i(t) \neq 0$ for some $t$ and \textbf{internal} if $I_i(t) = 0$ for all $t.$
\end{definition}

\section{Effective kinetics of chemical reaction networks}\label{eff}
In this section, we introduce the concept of effective kinetics and present examples of calculations for some simple systems.
\subsection{Definition of effective kinetics}
Here, we consider a situation where some species are supplied from outside of the system or concentration is controlled. The dynamics of such systems are described as open chemical reaction networks \cite{baez2017compositional, horn1974open}. 
\begin{definition}
Let $(\mathcal{S}, \mathcal{R}, \vb{s}, \vb{t}, \mathcal{K})$ be a kinetic reaction network. The kinetics of an \textbf{open reaction network} with the terminal species $\mathcal{S}_b \subset \mathcal{S}$ and input $\vb{I}_b \in \mathbb{R}^{\mathcal{S}_b}$ is defined by
\[
		\dv{\vb{x}_i}{t} =  \sum_{\rho \in \mathcal{R}} \mathcal{K}(\rho)\qty(\vb{x}_i, \vb{I}_b(t)) \qty(t_i(\rho) - s_i(\rho)),
\]
where $\vb{x}_i$, $t_i$ and $s_i$ are defined by projections to $\mathbb{R}^{\mathcal{S}_i},$ where $\mathcal{S}_i = \mathcal{S} \backslash \mathcal{S}_b.$
\end{definition}
\begin{remark}
It is more common to define open chemical networks using input by addition. However, we do not take this view here because the main interest is to find the conversion rate. 
\end{remark}
We define the effective kinetics by a `partially integrated' solution.
\begin{definition}[Effective kinetics]
Let $(\mathcal{S}, \mathcal{R}, \vb{s}, \vb{t}, \mathcal{K})$ be a kinetic reaction network. The \textbf{effective kinetics} $\mathcal{E}_s$ of species $s \in \mathcal{S}$ under the input $\vb{I}_b$ of the terminal species $\mathcal{S}_b \subset \mathcal{S}$ and the initial concentration $\vb{a}$ is defined by
\begin{equation}\label{def_efkine}
	 \mathcal{E}_s[\vb{I}_b;\vb{a}](t) := \sum_{\rho \in \mathcal{R}} \mathcal{K}(\rho)(R[\vb{I}_b; \vb{a}]) P_s\qty(\vb{t}(\rho) - \vb{s}(\rho))
\end{equation}
 where $P_s: \mathbb{R}^\mathcal{S} \to \mathbb{R}$ is defined by the projection to the subspace $\mathbb{R}^{\{s\}}$.
\end{definition}
By definition, we  have
\[
	\dv{\vb{x}_s}{t} =  \mathcal{E}_s[\vb{I}_b;\vb{a}](t).
\]
\begin{remark}
In \cite{toth2023}, the reaction extent of a chemical reaction network is considered. While we do not consider the rate in terms of the reaction frequency, here we compare the definitions to clarify the relation between them. The reaction extent $\xi$ is defined for an autonomous system of the form (\ref{eqn_crn}) by
\[
	\xi_\rho(t) := V \int_{0}^{t} \mathcal{K}(\rho)\qty(\vb{x}(\tau)) \mathrm{d}\tau,
\]
where $V$ is the volume of the system. The differences from the effective kinetics defined by (\ref{def_efkine}) are the setting of the system (automonous vs. non-autonomous) and the choice of variables (occurrence of reactions vs. concentration).
\end{remark}
\begin{definition}[Average effective kinetics]
Let $(\mathcal{S}, \mathcal{R}, \vb{s}, \vb{t}, \mathcal{K})$ be a kinetic reaction network. The \textbf{average effective kinetics} of species $s \in \mathcal{S}$ under the input $\vb{I}_b$ of the terminal species $\mathcal{S}_b \subset \mathcal{S}$ and the initial concentration $\vb{a}$ is defined to be
\[
	\lim_{T \to \infty}\frac{1}{T}\int_0^T \mathcal{E}_s[\vb{I}_b;\vb{a}](t)\mathrm{d} t.
\]
\end{definition}
When the average effective kinetics exists, it makes sense to discuss the rate of overall reaction without mentioning time. We note that the average effective kinetics exists if there is a well-defined rate at which the effective kinetics converge. Later, we will consider this situation in connection with the quasi-stationary state approximation scheme.

Here, we note that the initial behavior does not affect the average effective kinetics.
\begin{lemma}
Let $(\mathcal{S}, \mathcal{R}, \vb{s}, \vb{t}, \mathcal{K})$ be a kinetic reaction network.  For all $T_0>0,$ we have
\[
	\lim_{T \to \infty}\frac{1}{T}\int_0^T \mathcal{E}_s[\vb{I}_b;\vb{a}](t)\mathrm{d} t=\lim_{T \to \infty}\frac{1}{T}\int_{T_0}^{T_0+T} \mathcal{E}_s[\vb{I}_b;\vb{a}](t)\mathrm{d} t.
\]
\end{lemma}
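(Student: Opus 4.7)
The plan is to reduce the shifted Cesàro average to the unshifted one via the elementary identity
\[
\int_{T_0}^{T_0+T} \mathcal{E}_s[\vb{I}_b;\vb{a}](t)\,\mathrm{d}t = \int_0^{T_0+T} \mathcal{E}_s[\vb{I}_b;\vb{a}](t)\,\mathrm{d}t - \int_0^{T_0} \mathcal{E}_s[\vb{I}_b;\vb{a}](t)\,\mathrm{d}t.
\]
After dividing by $T$, the first integral is, up to a prefactor $(T_0+T)/T\to 1$, a reparametrization of the Cesàro average on $[0,T_0+T]$, which converges to the same limit as the left-hand side of the claim; the second integral is a constant, hence $O(1/T)$.

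First I would note that $t\mapsto \mathcal{E}_s[\vb{I}_b;\vb{a}](t)$ is continuous in $t$: each summand in the defining formula is the composition of a continuous rate function $\mathcal{K}(\rho)$ with the response $R[\vb{I}_b;\vb{a}]$, and the response is continuous in $t$ as the solution of an IVP. Consequently $C := \int_0^{T_0} \mathcal{E}_s[\vb{I}_b;\vb{a}](t)\,\mathrm{d}t$ is a finite constant independent of $T$, so $C/T \to 0$ as $T\to\infty$.

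For the principal term I would rewrite
\[
\frac{1}{T}\int_0^{T_0+T} \mathcal{E}_s[\vb{I}_b;\vb{a}](t)\,\mathrm{d}t = \frac{T_0+T}{T}\cdot\frac{1}{T_0+T}\int_0^{T_0+T} \mathcal{E}_s[\vb{I}_b;\vb{a}](t)\,\mathrm{d}t
\]
and set $T'=T_0+T$. Since $T'\to\infty$ iff $T\to\infty$, the second factor tends to the limit on the left-hand side of the claim, while the first factor tends to $1$. Combining the two pieces yields the stated equality. No step here is a genuine obstacle; the only point worth stating carefully is that the equation is to be read as an equality of two limits, each of which exists if and only if the other does, in line with the preceding paragraph's convention that the average effective kinetics is discussed only when the limit exists.
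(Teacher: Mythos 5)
Your argument is correct: the decomposition of the shifted integral, the $O(1/T)$ estimate on the constant piece, and the reparametrization $T' = T_0 + T$ with the prefactor $(T_0+T)/T \to 1$ together give the claim, and you rightly read the statement as asserting that either limit exists iff the other does, with equal values. The paper states this lemma without any proof, so there is nothing to compare against; your write-up is the standard argument one would supply, and the continuity remark justifying finiteness of $\int_0^{T_0}\mathcal{E}_s[\vb{I}_b;\vb{a}](t)\,\mathrm{d}t$ is the only hypothesis-level point worth making explicit (it rests on the paper's implicit assumption that the response exists globally in time).
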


\subsection{Examples of calculation for simple systems}
\begin{example}[Michaelis-Menten]
As a first example, we consider the Michaelis-Menten mechanism of enzymatic reaction:

\ce{ E + S <->[k_+][k_-] ES ->[l] E +P}.

The governing equation is given by
\[
\begin{aligned}
	\dv{[S]}{t} & = -k_+[E][S] + k_- [ES]\\
	\dv{[E]}{t} & = -k_+[E][S] +(k_- + l)[ES]  \\
	\dv{[ES]}{t} &= k_+[E][S] -(k_- + l) [ES]\\
	\dv{[P]}{t} & = l [ES].
\end{aligned}
\]

When we consider the case with input, we ignore the equation for $[S]$ and set $[S](t) = s(t).$ Then, we obtain an effective kinetics of the form
\[
	\mathcal{E}_P[\vb{I}_b;\vb{a}](t) = l k_+  E_0 e^{-k_+ \int_0^t s(\tau) \mathrm{d} \tau -(k_- + l)t} \int_0^t e^{k_+ \int_0^r s(\tau) \mathrm{d} \tau+ (k_-  + l)r} s(r) \mathrm{d} r.
\]
 Here, the nonlinearity of the MM kinetics is reflected in the non-additivity of the response.

If we keep the concentration of \ce{S} constant in the MM kinetics, this expression is simplified and we have
\[
	\mathcal{E}_P[\vb{I}_b;\vb{a}](t) = \frac{l k_+  E_0 s}{k_+ s + k_- + l}\qty(1 - e^{-(k_+ s + k_- + l)t}),
\]
where $s$ is the concentration of \ce{S}. 
As $t \to \infty,$ the effective kinetics approaches to the classical form of the MM kinetics. This derivation is known as reactant stationary assumption in the literature \cite{schnell2014validity}.
\end{example}
\begin{example}[First-order network]
For a reaction network consisting of first-order reactions, equation (\ref{eqn_crn}) takes the form
\[
	\dv{\vb{x}}{t} = A\vb{x},
\]
where $A$ is a constant matrix.

Let the controlled value be $\vb{x}_b(t) \in \mathbb{R}^{\mathcal{S}_b}.$  We assume that the coordinates are set up so that the controlled values take indices $1,2, \cdots, k$. By splitting matrix $A$ into four blocks
\[
	A = \mqty( A_{bb} & A_{bi} \\ A_{ib} & A_{ii}),
\]
 we see that the equation of the internal species is
\[
	\dv{\vb{x}_i}{t} = A_{ii} \vb{x}_i + A_{ib} \vb{x}_b(t).
\]
The solution to this equation is
\[
	\vb{x_i}(t) = e^{t A_{ii}} \vb{a}_i + e^{t A_{ii}}\int_{0}^t e^{s A_{ii}} A_{ib} \vb{x}_b(s) \mathrm{d} s.
\]
The condition of consistency is
\[
	\dv{\vb{x}_b}{t} = A_{bi} \vb{x}_i(t) + A_{bb}\vb{x}_b(t),
\]
therefore the control has to be chosen depending on $\vb{a}_i$ unless $A_{bi} = 0$. In the physical sense, this consistency condition gives the flux resulting from the change of concentration.

Effective kinetics is given by
\[
	\mathcal{E}_s[\vb{I}_b;\vb{a}](t) = A_{si} \qty(e^{t A_{ii}} \vb{a}_i + e^{t A_{ii}}\int_{0}^t e^{s A_{ii}} A_{ib} \vb{x}_b(s) \mathrm{d} s )+ A_{sb}  \vb{x}_b(s),
\]
where $A_s = \qty(A_{sb}\, A_{si})$. 
\end{example}

For some systems, the average effective kinetics under constant concentration control becomes constant dependent on the concentration of the input species.  Phenomena of this kind can be observed, of course, if subnetworks with stable equilibria exist.  However, there is another pattern.
\begin{example}\label{ex_aek}
Let us consider the average effective kinetics for the network
\begin{center}
\ce{A ->[\alpha] B <=>[k_+][k_{-}] C}
\end{center}
If we control the concentration of \ce{A}, the effective kinetics of \ce{C} is given by
\[
\mathcal{E}_C[\vb{I}_a;\vb{a}](t) = \alpha k_+ \int_0^t e^{-(k_+ + k_-)(t - s)} a(s)\mathrm{d} s,
\]
assuming $[C](0) = [B](0) =0.$ Thus the average effective kinetics of \ce{C} under constant concentration is 
\[
	\frac{\alpha k_+ [A]}{k_+ + k_-}.
\]
Note that there is no subnetwork with equilibria. In this case, we cannot apply the quasi-stationary state approximation. If we apply the quasi-stationary state approximation to \ce{B}, we obtain $\alpha [A]$ as the kinetics of \ce{C}.
\end{example}

\section{Straight-line solutions of non-autonomous ordinary differential equations}\label{rate}

In the rest of this paper, we consider the problem of the existence of a well-defined effective reaction rate under the assumption of constant input concentration. Such consideration of kinetics can be observed in the classical pseudo-first order approximation or the derivation of the Michaelis-Menten law by reactant stationary assumption. Here, we introduce the concept of straight-line solutions as a mathematical tool to formulate the notion of ``well-defined reaction rate" precisely. 
\subsection{Straight-line solutions of non-autonomous ordinary differential equations}\label{str}

The simplest solution of an ordinary differential equation is the equilibrium solution, which is defined by a zero of a vector field. The straight-line solution, which we define as follows, can be regarded as the next simplest kind of solution.
\begin{definition}[Straight-line solutions]
An ordinary differential equation $\dot{ \vb{x}} = \vb{F} \qty(t, \vb{x})$  admits a \textbf{straight-line solution} if there is a solution of form
\[
	\vb{x}(t) = t \vb{a} +\vb{c}
\]
where $\vb{a}$ and $\vb{c}$ are constant vectors. \end{definition}
In what follows, the coefficient $\vb{a}$ is called the \textbf{rate} of the straight-line solution.

We note that the existence of a straight-line solution is equivalent to that of a constant vector $\vb{c}$ with $\vb{F}\qty(\vb{F}\qty(\vb{c})t + \vb{c}) = \vb{F}\qty(\vb{c})$ for all $t$. In this sense, straight-line solutions are algebraically determined from the system.

\subsection{Affine systems}
First, we consider straight-line solutions for affine systems. Equations of this kind arise when we consider first-order networks with constant input, where the equation of kinetics typically takes the form
\[
	\dot{\vb{x}} = A \vb{x} + \vb{b}.
\]
For such systems, the notion of the straight-line solution characterizes the effective kinetics completely.
\begin{lemma}
Let $\vb{x}(t)$ be a solution of an affine ordinary differential equation $\dot{ \vb{x}} = A\vb{x} + \vb{b}$ such that there exists the limit
\[
	\lim_{t \to \infty} \dot{\vb{x}}(t) = \vb{a}.
\]
Then $\vb{a} \in \mathrm{ker}\, A$. Consequently, there exists a straight-line solution of the form $t \vb{a} +\vb{c}$, where $\vb{a} = A \vb{c} + \vb{b}$.
\end{lemma}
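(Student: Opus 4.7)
The plan is to establish two facts in sequence: first that the limiting rate $\vb{a}$ must lie in $\mathrm{ker}\, A$, and then that $\vb{a} - \vb{b}$ lies in $\mathrm{im}\, A$, which together produce the required straight-line solution.

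For the first fact, I would differentiate the ODE to get $\ddot{\vb{x}} = A\dot{\vb{x}}$. Since $\dot{\vb{x}}(t) \to \vb{a}$, continuity of $A$ gives $\ddot{\vb{x}}(t) \to A\vb{a}$. Now I invoke the standard calculus fact that if a differentiable function $f:[0,\infty) \to \mathbb{R}$ converges as $t \to \infty$ and $f'(t)$ also converges, then $f'(t) \to 0$; applied componentwise to $\dot{\vb{x}}$, this forces $A\vb{a} = 0$. (The calculus fact itself is a one-line mean-value theorem argument on the intervals $[n, n+1]$.)

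For the second fact, I would rearrange the ODE as $A\vb{x}(t) = \dot{\vb{x}}(t) - \vb{b}$, whose right-hand side tends to $\vb{a} - \vb{b}$. The left-hand side lies in $\mathrm{im}\, A$ for every $t$, and since $\mathrm{im}\, A$ is a finite-dimensional (hence closed) linear subspace of $\mathbb{R}^n$, the limit $\vb{a} - \vb{b}$ is also in $\mathrm{im}\, A$. Thus there exists $\vb{c}$ with $A\vb{c} = \vb{a} - \vb{b}$, equivalently $A\vb{c} + \vb{b} = \vb{a}$. A direct substitution then verifies that $\vb{x}(t) = t\vb{a} + \vb{c}$ is a solution: its derivative equals $\vb{a}$, and $A(t\vb{a} + \vb{c}) + \vb{b} = tA\vb{a} + A\vb{c} + \vb{b} = \vb{a}$, using $A\vb{a} = 0$ from the first step.

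The main obstacle is the passage from $\dot{\vb{x}} \to \vb{a}$ to $A\vb{a} = 0$; without invoking the auxiliary calculus lemma, one might be tempted to argue directly from boundedness of $\vb{x}$, but $\vb{x}(t)$ is generally unbounded when $\vb{a} \ne 0$. Once the kernel condition is in hand, the remainder is routine linear algebra exploiting the closedness of $\mathrm{im}\, A$ in finite dimensions.
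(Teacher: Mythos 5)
Your proof is correct, and the first half takes a genuinely different route from the paper's. Both arguments end identically: rewrite the ODE as $A\vb{x}(t) = \dot{\vb{x}}(t) - \vb{b}$, let $t \to \infty$, and use the closedness of $\mathrm{im}\, A$ in finite dimensions to conclude $\vb{a} - \vb{b} \in \mathrm{im}\, A$, which yields the straight-line solution. The difference is in how $A\vb{a} = 0$ is obtained. The paper first proves, by a Ces\`aro-type averaging estimate on $\frac{1}{t}\int_0^t \dot{\vb{x}}(s)\,\mathrm{d}s$, that $\vb{x}(t)/t \to \vb{a}$, and then computes $A\vb{a} = \lim_{t\to\infty} A\vb{x}(t)/t = \lim_{t\to\infty}(\dot{\vb{x}}(t)-\vb{b})/t = 0$, the last limit vanishing because $\dot{\vb{x}}(t)-\vb{b}$ is bounded. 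You instead differentiate the equation to get $\ddot{\vb{x}} = A\dot{\vb{x}} \to A\vb{a}$ and invoke the mean-value-theorem lemma that a function whose value and derivative both converge at infinity must have derivative tending to zero; this is legitimate since solutions of the affine equation are automatically $C^2$. Your version is slightly shorter and stays entirely in one-variable calculus; the paper's version has the side benefit of establishing the asymptotic growth rate $\vb{x}(t)/t \to \vb{a}$, which gives the coefficient $\vb{a}$ its interpretation as the long-time rate of the solution itself. Both are complete proofs of the stated lemma.
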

\begin{proof}
First, we observe that 
\[
	\lim_{t \to \infty} \frac{\vb{x}(t)}{t} = \vb{a}.
\]
This is established as follows. Let $\epsilon>0$ be arbitrary. Then, by the assumption, there exists $t_0 > 0$ such that
\[
	\|\dot{\vb{x}}(t) -\vb{a}\| < \epsilon
\]
for all $t > t_0$. Therefore we have
\[
	\begin{aligned}
		\left\| \frac{\vb{x}(t)}{t} -  \vb{a}\right\| &\leq \left\| \frac{1}{t}\int_{0}^{t} \dot{\vb{x}}(s) \mathrm{d} s -  \vb{a}\right\| + \frac{\left\|\vb{x}(0)\right\|}{t}\\
								&\leq \frac{1}{t}\qty(  \int_{0}^{t_0} \left\|\dot{\vb{x}}(s) -  \vb{a}\right\|\mathrm{d} s  +   \left\|\vb{x}(0)\right\|) + \frac{t-t_0}{t} \epsilon\\
								&\leq 2 \epsilon
	\end{aligned}
\]
for sufficiently large $t$.

From this result, we obtain
\[
	\begin{aligned}
		A \vb{a} & = \lim_{t \to \infty} \frac{A \vb{x}(t)}{t} \\
				&= \lim_{t \to \infty}  \frac{\dot{\vb{x}}(t) - \vb{b}}{t}\\
				&= 0.
	\end{aligned}
\]

Finally, $\vb{a} - \vb{b} \in \mathrm{im}\, A$ follows from the closedness of the image.
\end{proof}

 While we do not apply the results obtained here directly to chemical reaction networks, they illustrate the basic characteristics of the concept.
\begin{theorem}\label{thm_str}
An affine ordinary differential equation $\dot{ \vb{x}} = A\vb{x} + \vb{b}$ admits a straight-line solution for all choices of $\vb{b}$ if and only if the algebraic multiplicity of eigenvalue $0$ coincides with the dimension of the kernel of $A$. In addition, there is a non-constant straight-line solution if and only if $\vb{b} \not \in \mathrm{im}\, A.$ 
\end{theorem}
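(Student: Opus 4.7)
The plan is to reduce both claims to linear algebra by direct substitution, and then to invoke the standard criterion relating the kernel/image decomposition of $A$ to the multiplicities of its eigenvalue $0$. First I would substitute $\vb{x}(t) = t\vb{a} + \vb{c}$ into $\dot{\vb{x}} = A\vb{x} + \vb{b}$, obtaining $\vb{a} = tA\vb{a} + A\vb{c} + \vb{b}$ for all $t$; matching the coefficients of $t$ and the constant term yields the two conditions $A\vb{a} = 0$ and $\vb{a} = A\vb{c} + \vb{b}$. Hence a straight-line solution of rate $\vb{a}$ exists if and only if $\vb{a} \in \mathrm{ker}\, A$ and $\vb{a} - \vb{b} \in \mathrm{im}\, A$, and some straight-line solution exists if and only if $\vb{b} \in \mathrm{ker}\, A + \mathrm{im}\, A$.

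For the first biconditional, the requirement that a straight-line solution exist for every $\vb{b}$ becomes $\mathrm{ker}\, A + \mathrm{im}\, A = \mathbb{R}^n$. Since $\dim \mathrm{ker}\, A + \dim \mathrm{im}\, A = n$ by rank-nullity, this is equivalent to the sum being direct, i.e.\ $\mathrm{ker}\, A \cap \mathrm{im}\, A = \{0\}$. A short argument shows that this last condition is equivalent to $\mathrm{ker}\, A = \mathrm{ker}\, A^2$, which is the standard Fitting/Jordan criterion for the algebraic multiplicity of the eigenvalue $0$ to coincide with its geometric multiplicity.

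For the second biconditional, I would work under the first condition, so that $\mathbb{R}^n = \mathrm{ker}\, A \oplus \mathrm{im}\, A$, and decompose $\vb{b} = \vb{b}_0 + \vb{b}_1$ accordingly. The conditions $\vb{a} \in \mathrm{ker}\, A$ and $\vb{a} - \vb{b} \in \mathrm{im}\, A$ then force $\vb{a} - \vb{b}_0 \in \mathrm{ker}\, A \cap \mathrm{im}\, A = \{0\}$, so $\vb{a} = \vb{b}_0$ is uniquely determined; choosing any $\vb{c}$ with $A\vb{c} = -\vb{b}_1$ completes the solution. Thus the straight-line solution is non-constant precisely when $\vb{b}_0 \neq 0$, equivalently when $\vb{b} \notin \mathrm{im}\, A$. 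The only step that requires genuine care is the equivalence $\mathrm{ker}\, A \cap \mathrm{im}\, A = \{0\} \iff \mathrm{ker}\, A = \mathrm{ker}\, A^2 \iff$ algebraic and geometric multiplicities of $0$ agree, which I would record as a brief linear-algebra lemma rather than interleave with the main argument.
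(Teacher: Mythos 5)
Your proposal is correct and follows essentially the same route as the paper: substitute $t\vb{a}+\vb{c}$ to get $\vb{a}\in\ker A$ and $\vb{a}=A\vb{c}+\vb{b}$, reduce universal existence to $\ker A+\operatorname{im}A=\mathbb{R}^n$, hence (by rank--nullity) to $\ker A\cap\operatorname{im}A=\{0\}$, i.e.\ equality of algebraic and geometric multiplicity of $0$; and decompose $\vb{b}$ along $\ker A\oplus\operatorname{im}A$ for the second claim. Your version merely spells out the intermediate equivalence $\ker A\cap\operatorname{im}A=\{0\}\iff\ker A=\ker A^2$ more explicitly than the paper does.
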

\begin{proof}
First we note that an affine ordinary differential equation $\dot{ \vb{x}} = A\vb{x} + \vb{b}$ admits a straight-line solution for all choice of $\vb{b}$ if and only if $\mathbb{R}^n = \mathrm{im}\, A + \mathrm{ker}\, A$. Indeed, $\vb{x}(t) = t \vb{a} +\vb{c}$ is a solution of the equation if and only if $\vb{a} = A\vb{c} + \vb{b}$ and $\vb{a} \in \mathrm{ker}\, A$. Consequently, the existence of straight-line solutions is equivalent to $\mathrm{ker}\, A \cap \mathrm{im}\, A = \{0\}$, which is exactly the condition for the algebraic multiplicity of eigenvalue $0$ to coincide with the dimension of the kernel of $A$. 

To show the second statement, first let $\vb{b} = A \vb{b}_1 + \vb{b}_2$, where $\vb{b}_2 \in \ker A$ is non-zero. Then $\vb{x}(t) = t \vb{b}_2 -\vb{b}_1$ is a non-constant straight-line solution. Conversely, if $\vb{b} = A \vb{b}_1$, then each straight-line solution $t \vb{a} +\vb{c}$ satisfies
\[
	\vb{a} = A\qty(\vb{c} + \vb{b}_1) \in \ker A,
\]
which implies $\vb{a} = \vb{0}.$ Therefore the straight-line solution is constant.
\end{proof}
\begin{remark}
If $\vb{b} \not \in \mathrm{im}\, A,$ there is no bounded solution. Indeed, the existence of eigenvalue $0$ implies that there is an invariant function of the form $f(\vb{x}) = \vb{w} \cdot \vb{x}$ for the ODE $\dot{\vb{x}} = A \vb{x}$, where $\vb{w}^T$ is a left eigenvector of $A$ with eigenvalue $0$. As is easily seen, we have $\dot{f} = \vb{w} \cdot \vb{b}$ for the case with input $\vb{b}$. Thus, the solutions are unbounded if $ \vb{w} \cdot \vb{b} \neq 0$.
\end{remark}
\begin{corollary}\label{cor_eig}
An affine ordinary differential equation $\dot{ \vb{x}} = A\vb{x} + \vb{b}$ admits a straight-line solution for all choices of $\vb{b}$ if the algebraic multiplicity of $0$ is one.  When $\vb{b} \not \in \mathrm{im}\, A$, the non-constant straight-line solution is unique up to the time-translation. In particular, the rate is uniquely determined. Further in addition, if all eigenvalues of $A$ has non-positive real parts, all solutions $\vb{x}(t)$ satisfies
\[
	\lim_{t \to \infty} \| \dot{\vb{x}} (t) - \vb{a}\| = 0,
\]
where $\vb{a}$ is the rate of the straight-line solution.
\end{corollary}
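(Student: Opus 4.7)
The existence claim is immediate from Theorem \ref{thm_str}: when the algebraic multiplicity of the zero eigenvalue equals one, its geometric multiplicity is forced to equal one as well, so the ``algebraic multiplicity = geometric multiplicity'' condition of Theorem \ref{thm_str} is satisfied and we have the direct-sum decomposition $\mathbb{R}^n = \mathrm{ker}\, A \oplus \mathrm{im}\, A$.

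For the uniqueness of the rate, I would take two straight-line solutions $t\vb{a}_i + \vb{c}_i$ ($i=1,2$) and observe that $\vb{a}_i \in \mathrm{ker}\, A$ while $\vb{a}_1 - \vb{a}_2 = A(\vb{c}_1 - \vb{c}_2) \in \mathrm{im}\, A$, so the direct-sum decomposition above forces $\vb{a}_1 = \vb{a}_2 =: \vb{a}$. For uniqueness up to time-translation, subtracting the defining relations $\vb{a} = A\vb{c}_i + \vb{b}$ gives $\vb{c}_1 - \vb{c}_2 \in \mathrm{ker}\, A$, and the hypothesis $\vb{b} \not\in \mathrm{im}\, A$ guarantees $\vb{a} \neq 0$ (otherwise $\vb{b} = -A\vb{c}_1 \in \mathrm{im}\, A$). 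Since $\mathrm{ker}\, A$ is one-dimensional and thus spanned by $\vb{a}$, we obtain $\vb{c}_1 - \vb{c}_2 = s\vb{a}$ for some scalar $s$, which means $\vb{x}_1(t) = \vb{x}_2(t+s)$.

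For the convergence claim, I would write a general solution of the affine equation as $\vb{x}(t) = t\vb{a} + \vb{c} + e^{tA}\vb{v}$, where $\vb{v}$ is the initial displacement from the straight-line solution. Using the $A$-invariant splitting $\vb{v} = \vb{v}_0 + \vb{v}_1 \in \mathrm{ker}\, A \oplus V_{\neq 0}$, where $V_{\neq 0}$ is the sum of the generalized eigenspaces for non-zero eigenvalues, one has $\dot{\vb{x}}(t) - \vb{a} = A e^{tA}\vb{v} = A e^{tA}\vb{v}_1$ since $A\vb{v}_0 = 0$. Standard Jordan-form estimates then yield exponential decay of $e^{tA}\vb{v}_1$ on $V_{\neq 0}$ and give the desired limit.

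The main delicate point is precisely this last step: the hypothesis ``all eigenvalues have non-positive real parts'' literally permits non-zero purely imaginary eigenvalues (for example, the $3 \times 3$ matrix obtained by direct-summing a $2\times 2$ rotation block with a zero block has algebraic multiplicity one at $0$ but produces persistent oscillations in $\dot{\vb{x}}(t) - \vb{a}$). I would therefore read the condition as requiring every non-zero eigenvalue to have strictly negative real part, which makes $e^{tA}|_{V_{\neq 0}}$ exponentially contracting and validates the limit.
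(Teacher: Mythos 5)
Your proof is correct and follows essentially the same route as the paper: existence via Theorem \ref{thm_str}, uniqueness via the decomposition $\mathbb{R}^n = \ker A \oplus \mathrm{im}\, A$ together with the one-dimensionality of $\ker A$, and convergence by writing every solution as $t\vb{a}+\vb{c}+e^{tA}\vb{v}$ and letting the flow contract the component of $\dot{\vb{x}}(t)-\vb{a}$ lying in $\mathrm{im}\, A$. Your caveat about non-zero purely imaginary eigenvalues is well taken: the paper's proof asserts at this point that $\vb{0}$ is a globally asymptotically stable equilibrium of the flow restricted to $\mathrm{im}\, A$, which likewise holds only if the non-zero eigenvalues have strictly negative real parts, so the hypothesis should indeed be read (or restated) in the strengthened form you propose.
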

\begin{proof}
By Theorem \ref{thm_str}, there exists a straight-line solution under these assumptions. To show the uniqueness of the non-constant straight-line solutions, let $t\vb{a} + \vb{c}$ and $t\vb{a}' + \vb{c}'$ be solutions with non-zero rates. Then we have
\[
	A\qty(\vb{c}' - \vb{c}) = \vb{a}' - \vb{a} \in \ker A,
\]
which implies $\vb{a}' = \vb{a}$ because we have $\ker A \cap \mathrm{im}\, A = \{0\}$.  As $\ker A $ is spanned by $\vb{a}$, we have $\vb{c}' = \vb{c} + k \vb{a}$ for some $k$. Therefore, the non-constant straight-line solution exists and is unique up to the time-translation. 

For the last statement, we first note that $\mathrm{im}\, A$ is invariant under the flow generated by $A$. By assumption, $\vb{0}$ is a globally asymptotically stable equilibrium of the flow restricted to $\mathrm{im}\, A$. Let $ t \vb{a} +\vb{c}$ be a straight-line solution. Then, by direct calculations, we may show that every solution of the equation has the form
\[
	\vb{x}(t) = \vb{c} + t \vb{a} + e^{t A} \vb{y}_0
\]
for some $\vb{y}_0$. Then we have
\[
	\| \dot{\vb{x}} (t) - \vb{a}\| = \|e^{t A}A \vb{y}_0\|,
\]
which implies $\lim_{t \to \infty} \| \dot{\vb{x}} (t) - \vb{a}\| = 0$.
\end{proof}
The rate in Corollary \ref{cor_eig} can be calculated explicitly.
\begin{theorem}
Let $A$ be a matrix with eigenvalue $0$, which has algebraic multiplicity one, and other eigenvalues have negative real parts. Then the rate of the straight-line solution of $\dot{ \vb{x}} = A\vb{x} + \vb{b}$ is given by
\[
	\frac{1}{p'(0)} \mathrm{adj}\qty(-A) \vb{b},
\]
where $p(s)$ is the characteristic polynomial of $A$ and $\mathrm{adj}\qty(-A) $ is the adjugate matrix of $-A$.
\end{theorem}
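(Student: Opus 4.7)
The strategy is to reduce to the uniqueness statement already established and then verify that the proposed formula satisfies the algebraic conditions that characterize the rate. Under the present hypotheses, $0$ is a simple eigenvalue of $A$ and all other eigenvalues have negative real parts, so in particular $\mathrm{ker}\, A \cap \mathrm{im}\, A = \{0\}$ and Corollary \ref{cor_eig} applies. The rate is therefore the unique vector $\vb{a}$ satisfying $\vb{a} \in \mathrm{ker}\, A$ and $\vb{a} - \vb{b} \in \mathrm{im}\, A$ (the latter being equivalent to solvability of $\vb{a} = A\vb{c} + \vb{b}$ in $\vb{c}$). It therefore suffices to verify these two conditions for the candidate $\vb{a} := \frac{1}{p'(0)} \mathrm{adj}(-A) \vb{b}$.

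For the kernel condition I would invoke the classical cofactor identity $M \cdot \mathrm{adj}(M) = \det(M)\, I$ applied to $M = -A$. Since $0$ is an eigenvalue of $A$, $\det(-A) = p(0) = 0$, so $A \cdot \mathrm{adj}(-A) = 0$, hence $A \vb{a} = 0$. For the image condition, the key step is to differentiate the matrix identity $(sI - A)\, \mathrm{adj}(sI - A) = p(s)\, I$ with respect to $s$ and evaluate at $s = 0$; this yields
\[
\mathrm{adj}(-A) - A \cdot B = p'(0)\, I,
\]
where $B$ is the value at $s = 0$ of the derivative of the polynomial matrix $\mathrm{adj}(sI - A)$. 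Applying this identity to $\vb{b}$ and dividing by $p'(0)$ gives $\vb{a} = \vb{b} + A \bigl( \frac{1}{p'(0)} B\, \vb{b} \bigr)$, so that $\vb{a} - \vb{b} \in \mathrm{im}\, A$.

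Combining the two properties with Corollary \ref{cor_eig} identifies $\vb{a}$ as the rate, which completes the proof. The only genuine idea in the argument is the differentiation of the cofactor identity: evaluating it at $s = 0$ alone gives nothing beyond $p(0) = 0$, whereas the first-order coefficient in $s$ is precisely what separates the $\mathrm{ker}\, A$-component from the $\mathrm{im}\, A$-component when applied to $\vb{b}$. Everything else is routine linear algebra using the decomposition $\mathbb{R}^n = \mathrm{ker}\, A \oplus \mathrm{im}\, A$ guaranteed by the simplicity of the eigenvalue $0$.
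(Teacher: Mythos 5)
Your proof is correct, but it takes a genuinely different route from the paper. The paper computes the rate analytically: it takes the Laplace transform of a solution, uses the identity $A\,\mathrm{adj}(sI-A) = -p(s)I + s\,\mathrm{adj}(sI-A)$ together with the factorization $p(s)=s\,q(s)$, $q(0)=p'(0)$, and then applies the final value theorem to evaluate $\lim_{t\to\infty}\dot{\vb{x}}(t)$ (the boundedness needed to justify the transform is supplied by Corollary \ref{cor_eig}). You instead use Corollary \ref{cor_eig} only to reduce the problem to pure algebra: since $0$ is a simple eigenvalue, $\mathbb{R}^n = \mathrm{ker}\, A \oplus \mathrm{im}\, A$ and the rate is the unique $\vb{a}$ with $\vb{a}\in\mathrm{ker}\,A$ and $\vb{a}-\vb{b}\in\mathrm{im}\,A$; you then check both conditions for the candidate via $(-A)\,\mathrm{adj}(-A)=p(0)I=0$ and the differentiated adjugate identity
\[
	\mathrm{adj}(-A) - A\,B = p'(0)\, I, \qquad B = \left.\tfrac{\mathrm{d}}{\mathrm{d}s}\,\mathrm{adj}(sI-A)\right|_{s=0}.
\]
Both verifications are sound, and your argument also covers the degenerate case $\vb{b}\in\mathrm{im}\,A$ uniformly (the formula then returns $\vb{0}$, consistent with the straight-line solution being constant). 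What your approach buys is a cleaner separation of concerns: the analytic content (existence, uniqueness, and the fact that the rate is the limit of $\dot{\vb{x}}$) is quarantined in Corollary \ref{cor_eig}, while the formula itself is forced by linear algebra, with no need to invoke the final value theorem or verify its pole conditions. What the paper's approach buys is a direct, self-contained computation of the limit $\lim_{t\to\infty}\dot{\vb{x}}(t)$ that does not presuppose the characterization of the rate. The two arguments are extracting the same first-order coefficient: your differentiation of $(sI-A)\,\mathrm{adj}(sI-A)=p(s)I$ at $s=0$ is exactly the information the paper recovers by dividing by $q(s)$ and letting $s\to 0$.
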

\begin{proof}
By the assumption, we have $p'(0) \neq 0$ and there exists a polynomial $q$ such that $p(s) = s q(s),$ where $q(0) = p'(0).$
Let $\vb{x}(t)$ be a solution with initial value $\vb{x}_0$.
By Corollary \ref{cor_eig}, $\dot{\vb{x}}(t)$ is bounded. Therefore we have $\|\vb{x}(t)\| \leq M t + C$ for some positive constants $M$ and $C$. This enables us to consider the Laplace transform $\vb{X}(s)$ of $\vb{x}(t)$. Now we use the final value theorem to calculate the rate $\vb{a} = \lim_{t \to \infty} \dot{\vb{x}}(t)$.

By taking the Laplace transform, we obtain
\[
	\qty(sI - A)\vb{X}(s) = \vb{x}_0 +\frac{\vb{b}}{s}.
\]
Therefore we have
\[
	p(s) \vb{X}(s) =  \mathrm{adj}\,\qty(sI -A)\qty( \vb{x}_0 +\frac{\vb{b}}{s}).
\]
Noting that $A \,\mathrm{adj}\,\qty(sI -A) = -p(s)I + s\,  \mathrm{adj}\,\qty(sI -A),$ we obtain
\[
	\begin{aligned}
		s\qty(A\vb{X}(s) + \frac{\vb{b}}{s}) & = A\qty(\frac{p(s)}{q(s)} \vb{X}(s)) + \vb{b}\\
									&= \frac{1}{q(s)} \mathrm{adj}\,\qty(sI -A)\qty( \vb{x}_0 +\frac{\vb{b}}{s}) + \vb{b}\\
									&=  \frac{1}{q(s)}\qty(  -p(s)I + s\,  \mathrm{adj}\,\qty(sI -A))\qty( \vb{x}_0 +\frac{\vb{b}}{s}) + \vb{b}\\
									&=\frac{1}{q(s)} \mathrm{adj}\,\qty(sI -A))\qty( s\vb{x}_0 +\vb{b})-s \vb{x}_0. 
	\end{aligned}
\]
Therefore we have
\[
	\lim_{t \to \infty} \dot{\vb{x}}(t) = \frac{1}{p'(0)} \mathrm{adj}\qty(-A) \vb{b}.
\]
\end{proof}
A graph-theoretical sufficient condition is given as follows.
\begin{theorem}
Let the weighted digraph associated with matrix $A$ satisfy the following conditions:
\begin{enumerate}
	\item The loop of each vertex has a negative weight.
	\item Other than loops, each edge has a non-negative weight.
	\item The digraph is irreducible.
	\item The in-degree of each vertex is zero.
\end{enumerate}
Then, the affine ordinary differential equation $\dot{ \vb{x}} = A\vb{x} + \vb{b}$ admits a straight-line solution for all choices of $\vb{b}$. Further, all solutions $\vb{x}(t)$ satisfies
\[
	\lim_{t \to \infty} \| \dot{\vb{x}} (t) - \vb{a}\| = 0,
\]
where $\vb{a}$ is the rate of the straight-line solution.
\end{theorem}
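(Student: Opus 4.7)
The plan is to reduce the theorem to Corollary \ref{cor_eig} by verifying that the graph-theoretical hypotheses force $0$ to be an algebraically simple eigenvalue of $A$ and that every other eigenvalue has strictly negative real part. Conditions (1), (2), and (4) together make $A$ a Laplacian-type matrix of the weighted digraph, and irreducibility (condition 3) is what promotes the resulting spectral facts from geometric to algebraic form.

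First, I would extract the eigenvalue $0$ from condition (4). Interpreting the vanishing of the in-degree as the vanishing of the column sums of $A$ (or row sums, depending on the convention used to associate the digraph with the matrix), the all-ones vector is a left (or right) null vector of $A$, so $0$ is an eigenvalue. Conditions (1) and (2), combined with (4), then place every Gershgorin disc in the closed left half-plane: each is centred at $A_{ii}<0$ with radius $|A_{ii}|$ and is therefore internally tangent to the imaginary axis at the origin. This already yields $\mathrm{Re}\,\lambda \le 0$ for every eigenvalue $\lambda$, with equality only if $\lambda = 0$.

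The next step is to upgrade these conclusions using irreducibility. Choose $c>0$ large enough that $B := A + cI$ is entry-wise non-negative. The off-diagonal structure of $B$ agrees with that of $A$, so the digraph of $B$ is still irreducible. The Perron--Frobenius theorem for irreducible non-negative matrices then guarantees that the spectral radius of $B$ is algebraically simple; the Gershgorin estimate forces this spectral radius to equal $c$, so $c$ is an algebraically simple eigenvalue of $B$, whence $0$ is an algebraically simple eigenvalue of $A$. The strong form of Perron--Frobenius furthermore rules out any other eigenvalue of $B$ on the circle of radius $c$, which translates into the statement that $A$ has no eigenvalue other than $0$ on the imaginary axis.

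Having verified that $0$ is an algebraically simple eigenvalue of $A$ and that all other eigenvalues have strictly negative real parts, Corollary \ref{cor_eig} applies and yields both the existence of a straight-line solution for every $\vb{b}$ and the convergence $\lim_{t \to \infty}\|\dot{\vb{x}}(t) - \vb{a}\| = 0$. The main obstacle I expect is securing algebraic (as opposed to merely geometric) simplicity of $0$: Gershgorin on its own is insufficient, and the irreducibility hypothesis is used precisely at this point, through Perron--Frobenius (or, as an alternative route, through the matrix-tree theorem for weighted digraphs, which produces an explicit combinatorial basis for the kernel).
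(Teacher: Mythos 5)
Your proposal is correct and follows essentially the same route as the paper: both shift $A$ to a non-negative irreducible matrix $B = A + cI$, apply Perron--Frobenius to get algebraic simplicity of the Perron root (hence of the eigenvalue $0$ of $A$) together with the location of the rest of the spectrum, and then invoke Corollary \ref{cor_eig}. The only cosmetic difference is that the paper identifies $\rho(B)=c$ via the positive left eigenvector $(1,\dots,1)$ coming from the zero weighted in-degrees, whereas you pin it down with Gershgorin discs plus the known eigenvalue $0$; both arguments are sound.
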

\begin{proof}
Under these assumptions, the matrix $A$ is Metzler and irreducible. Therefore, there is a non-negative irreducible matrix $B$ and  $\alpha>0$ such that $A = -\alpha I + B.$ Further, it has $\vb{w} = (1,1, \cdots, 1)$ as a left eigenvector with eigenvalue $0$, which implies that the Perron-Frobenius eigenvector of $B$ is $\vb{w}$. As $\sigma(A) = \sigma(B) -\alpha$, we conclude that the algebraic multiplicity of eigenvalue $0$ is one for $A$, and all eigenvalues of $A$ have non-positive real parts. Therefore, Corollary \ref{cor_eig} is applicable.
\end{proof}

\subsection{Systems with the skew-product structures}
If a chemical reaction network has a group of species produced irreversibly, then the system of ordinary differential equations assumes a form called skew-product structure. 
\begin{definition}[Skew-product structure]
An ordinary differential equation $\dot{\vb{x}} = \vb{F}\qty(\vb{x})$ has the \textbf{skew-product structure} if there is a splitting of variables $\vb{x} = (\vb{p}, \vb{q})$ such that the equation assumes the following form:
\[
	\begin{aligned}
		\dv{\vb{p}}{t} &= \vb{f}\qty(\vb{p})\\
		\dv{\vb{q}}{t} &= \vb{g}\qty(\vb{p}, \vb{q})
	\end{aligned}
\]
\end{definition}
It is clear that the skew-product structure is dependent on the choice of coordinates. 

The next result gives conditions for an ODE with the skew-product structure to have a non-constant straight-line solution.
\begin{lemma}\label{lem_nec}
Let an ordinary equation $\dot{\vb{x}} = \vb{F}\qty(\vb{x})$ have the skew-product structure given by $\vb{F} =(\vb{f}, \vb{g})$. 
\begin{enumerate}
	\item If there is a straight-line solution, then the equation $\dot{\vb{p}} = \vb{f}\qty(\vb{p})$ has a straight-line solution.
	\item If there is a non-constant straight-line solution $t \vb{a} + \vb{c}$, then either $J_{\vb{p}} \vb{f}(\vb{c}_1)$ or $J_{\vb{q}} \vb{g}(\vb{c}_2)$ is degenerate, where $\vb{c} = \qty(\vb{c}_1, \vb{c}_2)$ and $J_{\vb{x}}\vb{G}(\vb{c})$ denotes the Jacobian matrix of $\vb{G}$ with respect to $\vb{x}$ evaluated at $\vb{c}$.
	\item If there is a globally asymptotically stable equilibrium, then the only straight-line solution is constant.
\end{enumerate}
\end{lemma}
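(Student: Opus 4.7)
The plan is to treat the three parts separately, but to derive them all from a single linearization identity obtained by differentiating the straight-line relation $\vb{F}(t\vb{a} + \vb{c}) = \vb{F}(\vb{c}) = \vb{a}$ (which is precisely the algebraic characterization of straight-line solutions recalled after the definition in Section \ref{str}). Differentiating in $t$ at $t = 0$ yields
\[
	J\vb{F}(\vb{c})\, \vb{a} = \vb{0}.
\]
Writing $\vb{a} = (\vb{a}_1, \vb{a}_2)$ and $\vb{c} = (\vb{c}_1, \vb{c}_2)$ according to the splitting $\vb{x} = (\vb{p}, \vb{q})$, the skew-product form makes $J\vb{F}$ block lower-triangular, so this single vector identity splits as
\[
	J_{\vb{p}}\vb{f}(\vb{c}_1)\, \vb{a}_1 = \vb{0},\qquad
	J_{\vb{p}}\vb{g}(\vb{c}_1,\vb{c}_2)\, \vb{a}_1 + J_{\vb{q}}\vb{g}(\vb{c}_1,\vb{c}_2)\, \vb{a}_2 = \vb{0}.
\]

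For (1), I would simply observe that the $\vb{p}$-component of $t\vb{a} + \vb{c}$ is $t \vb{a}_1 + \vb{c}_1$, and because the first block of the skew-product system is autonomous in $\vb{p}$ alone, this linear function automatically solves $\dot{\vb{p}} = \vb{f}(\vb{p})$. No additional argument is needed; the point is that the skew structure makes the projection onto $\vb{p}$ a legitimate reduction.

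For (2), which I expect to be the only step requiring actual thought, I would use the identity above together with a case split on $\vb{a}_1$. If $\vb{a}_1 \neq \vb{0}$, the first block relation exhibits a nonzero kernel vector for $J_{\vb{p}}\vb{f}(\vb{c}_1)$, so this Jacobian is degenerate. If $\vb{a}_1 = \vb{0}$, then the hypothesis that $t\vb{a} + \vb{c}$ is non-constant forces $\vb{a}_2 \neq \vb{0}$, and the second block relation collapses to $J_{\vb{q}}\vb{g}(\vb{c}_1,\vb{c}_2)\, \vb{a}_2 = \vb{0}$, showing degeneracy of $J_{\vb{q}}\vb{g}$. The only subtlety here is to make sure that in the second case the value of $\vb{p}$ entering $\vb{g}$ is indeed $\vb{c}_1$; this follows because $\vb{a}_1 = \vb{0}$ makes the $\vb{p}$-coordinate stationary along the straight-line solution.

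For (3), I would argue purely dynamically, without invoking the linearization. A straight-line solution $\vb{x}(t) = t\vb{a} + \vb{c}$ is unbounded whenever $\vb{a} \neq \vb{0}$, so it cannot converge to any point, in particular not to the globally asymptotically stable equilibrium. Since global asymptotic stability forces every solution to converge to that equilibrium, we must have $\vb{a} = \vb{0}$ and the straight-line solution is constant. This part does not really use the skew-product hypothesis; stating it here is natural because it rules out non-trivial rates for well-behaved systems, complementing parts (1) and (2).
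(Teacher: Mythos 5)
The paper states Lemma \ref{lem_nec} without any proof, so there is no argument of record to compare against; judged on its own, your proposal is correct and supplies exactly what is missing. The one identity $J\vb{F}(\vb{c})\vb{a}=\vb{0}$, obtained by differentiating $\vb{F}(t\vb{a}+\vb{c})=\vb{a}$ in $t$ (legitimate here, since the statement of (2) already presupposes the Jacobians exist at $\vb{c}$), combined with the block lower-triangular form of $J\vb{F}$ forced by the skew-product structure, is the natural engine for (2), and your case split on $\vb{a}_1$ is exactly right: the paper's notation $J_{\vb{q}}\vb{g}(\vb{c}_2)$ is best read as evaluation at $(\vb{c}_1,\vb{c}_2)$, as you do. Part (1) is indeed immediate from the fact that the $\vb{p}$-block is closed, and part (3) is the standard boundedness contradiction (a nonzero rate gives an unbounded trajectory, incompatible with every solution converging to the equilibrium); you are also right that (3) does not use the skew-product hypothesis at all. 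No gaps.
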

An obvious scenario for (2) in the last lemma to happen is that $\vb{g}$ is independent of $\vb{q}$.
In this case, the existence of straight-line solutions can be verified easily. 
\begin{theorem}
Let an ordinary differential equation have the following form: 
\[
	\begin{aligned}
		\dv{\vb{p}}{t} &= \vb{f}\qty(\vb{p})\\
		\dv{\vb{q}}{t} &= \vb{g}\qty(\vb{p})
	\end{aligned}
\]
Assume that there is a globally asymptotically stable equilibrium for $\dv{\vb{p}}{t} = \vb{f}\qty(\vb{p})$. Then, there exists a straight-line solution and every solution $\vb{x}(t) = \qty(\vb{p}(t), \vb{q}(t))$ satisfies
\[
	\lim_{t \to \infty} \| \dot{\vb{q}} (t) - \vb{g}\qty(\vb{p}_{*}) \| = 0,
\]
where $\vb{p}_{*}$ is the equilibrium of $\dv{\vb{p}}{t} = \vb{f}\qty(\vb{p})$. The straight-line solution is unique up to the time-translation.
\end{theorem}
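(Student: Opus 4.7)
The strategy is to exploit the decoupled nature of the skew-product: since $\vb{f}$ depends only on $\vb{p}$ and $\vb{f}(\vb{p}_*) = \vb{0}$, the constant choice $\vb{p}(t) \equiv \vb{p}_*$ collapses the second equation to the autonomous constant-coefficient ODE $\dot{\vb{q}} = \vb{g}(\vb{p}_*)$, whose solutions are literally straight lines. For existence, I will simply point to $(\vb{p}_*, \vb{q}_0 + t\vb{g}(\vb{p}_*))$ as the required straight-line solution, with rate $(\vb{0}, \vb{g}(\vb{p}_*))$; verification is immediate from $\vb{f}(\vb{p}_*) = \vb{0}$.

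For the convergence statement, I will invoke the global asymptotic stability hypothesis to conclude that every trajectory of $\dot{\vb{p}} = \vb{f}(\vb{p})$ satisfies $\vb{p}(t) \to \vb{p}_*$ as $t \to \infty$. Since the $\vb{q}$-component of the flow obeys $\dot{\vb{q}}(t) = \vb{g}(\vb{p}(t))$ and $\vb{g}$ is continuous, passing to the limit yields $\|\dot{\vb{q}}(t) - \vb{g}(\vb{p}_*)\| \to 0$, which is exactly the asserted limit.

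For uniqueness, I will take an arbitrary straight-line solution $t\vb{a} + \vb{c}$ and decompose it as $(t\vb{a}_1 + \vb{c}_1, t\vb{a}_2 + \vb{c}_2)$. The $\vb{p}$-component is itself a straight-line solution of $\dot{\vb{p}} = \vb{f}(\vb{p})$, and by global asymptotic stability it must converge to $\vb{p}_*$; since any non-constant affine function of $t$ is unbounded, this forces $\vb{a}_1 = \vb{0}$ and then $\vb{c}_1 = \vb{p}_*$ by uniqueness of the equilibrium in its basin. Substituting back, $\dot{\vb{q}} = \vb{g}(\vb{p}_*)$ pins down $\vb{a}_2 = \vb{g}(\vb{p}_*)$, so the only remaining freedom is the offset $\vb{c}_2$, which a time-shift $t \mapsto t+s$ transforms into $\vb{c}_2 + s\vb{g}(\vb{p}_*)$; this delivers uniqueness up to time-translation in the non-degenerate case $\vb{g}(\vb{p}_*) \neq \vb{0}$. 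Given the strong hypotheses, no step poses a serious technical challenge; the one point that merits attention is the observation ruling out affine-in-$t$ profiles for the $\vb{p}$-component other than the constant at $\vb{p}_*$, which is the only place where the global (as opposed to local) stability is genuinely used.
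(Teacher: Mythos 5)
Your proposal is correct and follows essentially the same route as the paper: the explicit straight-line solution $(\vb{p}_*,\; t\,\vb{g}(\vb{p}_*)+\vb{c}_2)$, convergence via $\|\dot{\vb{q}}(t)-\vb{g}(\vb{p}_*)\|=\|\vb{g}(\vb{p}(t))-\vb{g}(\vb{p}_*)\|$ together with continuity of $\vb{g}$ and global asymptotic stability, and uniqueness by forcing the $\vb{p}$-component of any straight-line solution to be the constant $\vb{p}_*$. The only difference is that you argue uniqueness directly (boundedness of convergent trajectories rules out a non-constant affine $\vb{p}$-component) where the paper cites its Lemma on skew-product structures, and your explicit caveat about the offset $\vb{c}_2$ when $\vb{g}(\vb{p}_*)\neq\vb{0}$ is a fair observation about the precise sense in which ``unique up to time-translation'' holds.
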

\begin{proof}
It is obvious that 
\[
	\vb{x}_*(t) = \mqty(\vb{p}_*\\ t \vb{g}\qty(\vb{p}_{*}))
\]
is a straight-line solution. The second assertion follows from
\[
	\|\dot{\vb{q}} (t) - \vb{g}\qty(\vb{p}_{*})\| = \| \vb{g}\qty(\vb{p}(t))- \vb{g}\qty(\vb{p}_{*})\|
\]
and the global asymptotic stability of $\vb{p}_{*}$. The uniqueness of the straight-line solution follows from Lemma \ref{lem_nec}.
\end{proof}
Moreover, we may calculate the time for the rate to stabilize if additional assumptions are satisfied.
\begin{corollary}
Let an ordinary differential equation have the following form: 
\[
	\begin{aligned}
		\dv{\vb{p}}{t} &= \vb{f}\qty(\vb{p})\\
		\dv{\vb{q}}{t} &= \vb{g}\qty(\vb{p})
	\end{aligned}
\]
Assume that there is a globally asymptotically stable equilibrium for $\dv{\vb{p}}{t} = \vb{f}\qty(\vb{p})$, $\vb{p}_*$ is uniformly exponentially stable i.e. there exist $\lambda, \gamma>0$ such that
\[
	\|\vb{p}(t) -\vb{p}_* \|\leq \gamma e^{-\lambda(t-t_0)}\|\vb{p}(t_0) -\vb{p}_*\|,
\]
and $\vb{g}$ is globally Lipschitz with Lipschitz constant $L$. Then, for all solution $\vb{x}(t) = \qty(\vb{p}(t), \vb{q}(t))$, we have
\[
	\| \dot{\vb{q}} (t) - \vb{g}\qty(\vb{p}_{*}) \|  \leq L \gamma e^{-\lambda(t-t_0)}\|\vb{p}(t_0) -\vb{p}_*\|.
\]
\end{corollary}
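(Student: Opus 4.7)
The plan is to bound $\|\dot{\vb{q}}(t) - \vb{g}(\vb{p}_*)\|$ by a direct three-step chain of inequalities, using in turn the defining ODE for $\vb{q}$, the global Lipschitz property of $\vb{g}$, and the uniform exponential stability estimate on $\vb{p}$. This matches exactly the structure of the preceding theorem's second assertion, only with each qualitative step replaced by its quantitative counterpart.

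First I would rewrite $\dot{\vb{q}}(t)$ as $\vb{g}(\vb{p}(t))$ using the skew-product form, reducing the quantity to estimate to $\|\vb{g}(\vb{p}(t)) - \vb{g}(\vb{p}_*)\|$. Next, I would apply the global Lipschitz hypothesis on $\vb{g}$ with constant $L$ to obtain the bound $L\|\vb{p}(t) - \vb{p}_*\|$. Finally, I would substitute the uniform exponential stability estimate $\|\vb{p}(t) - \vb{p}_*\| \leq \gamma e^{-\lambda(t-t_0)} \|\vb{p}(t_0) - \vb{p}_*\|$ to arrive at the claimed inequality.

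There is no serious obstacle here: the corollary is a short quantitative refinement of the previous theorem, and the argument above is essentially linear in structure. The only minor point worth checking is that the two hypotheses are invoked at a common base time $t_0$, so that the constants $L$, $\gamma$ and $\lambda$ combine cleanly into the single exponential decay bound stated, rather than producing extraneous terms.
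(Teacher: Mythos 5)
Your proposal is correct and follows exactly the argument the paper intends (the paper omits an explicit proof, but the chain $\|\dot{\vb{q}}(t)-\vb{g}(\vb{p}_*)\| = \|\vb{g}(\vb{p}(t))-\vb{g}(\vb{p}_*)\| \leq L\|\vb{p}(t)-\vb{p}_*\| \leq L\gamma e^{-\lambda(t-t_0)}\|\vb{p}(t_0)-\vb{p}_*\|$ is precisely the quantitative version of the preceding theorem's proof). Nothing is missing.
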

A more general case where the existence of the rate can be established is given by the following result, which can be obtained immediately from Theorem \ref{thm_str}.
\begin{corollary}
Let an ordinary differential equation have the following form: 
\[
	\begin{aligned}
		\dv{\vb{p}}{t} &= \vb{f}\qty(\vb{p})\\
		\dv{\vb{q}}{t} &= A \vb{q} + \vb{g}\qty( \vb{p}),
	\end{aligned}
\]
where $A$ is a real matrix.
Assume that there is an equilibrium for $\dv{\vb{p}}{t} = \vb{f}\qty(\vb{p})$. Then, a straight-line solution exists if the algebraic multiplicity of eigenvalue $0$ coincides with the dimension of the kernel of $A$.
\end{corollary}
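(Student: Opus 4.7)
The plan is to reduce the problem to the affine case of Theorem \ref{thm_str} by using the assumed equilibrium of the base dynamics to freeze the $\vb{p}$-component.

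First, I would seek a straight-line solution of the form $\vb{x}(t) = t\vb{a} + \vb{c}$ with $\vb{a} = (\vb{a}_1, \vb{a}_2)$ and $\vb{c} = (\vb{c}_1, \vb{c}_2)$ split according to the $(\vb{p}, \vb{q})$ decomposition. By Lemma \ref{lem_nec}(1), the $\vb{p}$-component $t\vb{a}_1 + \vb{c}_1$ must itself be a straight-line solution of $\dot{\vb{p}} = \vb{f}(\vb{p})$. The simplest choice consistent with the hypothesis is to take $\vb{a}_1 = \vb{0}$ and $\vb{c}_1 = \vb{p}_*$, where $\vb{p}_*$ is the assumed equilibrium of $\vb{f}$; then $\vb{p}(t) \equiv \vb{p}_*$ trivially solves $\dot{\vb{p}} = \vb{f}(\vb{p})$.

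Next, with $\vb{p}(t)$ constant at $\vb{p}_*$, the second equation reduces to the affine system
\[
	\dot{\vb{q}} = A\vb{q} + \vb{g}(\vb{p}_*),
\]
with constant forcing term $\vb{b} := \vb{g}(\vb{p}_*)$. At this point, Theorem \ref{thm_str} applies directly: under the assumption that the algebraic multiplicity of the eigenvalue $0$ of $A$ equals the dimension of $\ker A$, a straight-line solution $t\vb{a}_2 + \vb{c}_2$ exists for every choice of $\vb{b}$, in particular for $\vb{b} = \vb{g}(\vb{p}_*)$. Assembling the components gives the desired straight-line solution
\[
	\vb{x}(t) = \begin{pmatrix} \vb{p}_* \\ t\vb{a}_2 + \vb{c}_2 \end{pmatrix}.
\]

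This is a genuinely short argument, so I do not anticipate a serious obstacle; the only subtlety is noticing that, although Lemma \ref{lem_nec}(1) leaves open the possibility of a non-constant straight-line solution in $\vb{p}$, the equilibrium hypothesis already supplies a (constant) straight-line solution, which is all we need for an existence statement. The reduction to the affine setting of Theorem \ref{thm_str} then does all the remaining work.
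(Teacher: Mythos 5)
Your proposal is correct and is exactly the argument the paper intends: the paper gives no written proof, stating only that the corollary ``can be obtained immediately from Theorem \ref{thm_str}'', and your reduction --- freezing $\vb{p}$ at the equilibrium $\vb{p}_*$ so that the $\vb{q}$-equation becomes the affine system $\dot{\vb{q}} = A\vb{q} + \vb{g}(\vb{p}_*)$, to which Theorem \ref{thm_str} applies --- is precisely that immediate argument spelled out.
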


\section{Skew product systems and quasi-stationary state approximations}\label{skew}
In this section, we consider the effective reaction rates for reaction networks with the skew-product structure. Such a situation is rather common. If a kinetic reaction network $(\mathcal{S}, \mathcal{R}, \vb{s}, \vb{t}, \mathcal{K})$ has a species $P \in \mathcal{S}$ as an irreversible product, that is, it appears only as a product, then we may write the equation of kinetics
in a skew product form
\[
	\begin{aligned}
		\dv{\vb{x}}{t} &= \vb{f}\qty(\vb{x})\\
		\dv{[P]}{t} &= g\qty(\vb{x}).
	\end{aligned}
\]
When we regard a chemical reaction network as a conversion system, some species are contained only as reactants, and others appear only as products. 
\begin{definition}
Let $(\mathcal{S}, \mathcal{R}, \vb{s}, \vb{t})$ be a chemical reaction network. 
\begin{enumerate}
	\item A species $F \in \mathcal{S}$ is a \textbf{feeding species} if $F \in \vb{s}(\rho)\cup \vb{t}(\rho)$ implies $F \in \vb{s}(\rho)$.
	\item A species $P \in \mathcal{S}$ is a \textbf{product species} if $P \in \vb{s}(\rho)\cup \vb{t}(\rho)$ implies $P \in \vb{t}(\rho)$.
	\item A species $I \in \mathcal{S}$ is an \textbf{intermediate species} if there are reactions $\rho$ and $\rho'$ with $I \in \vb{s}(\rho)$ and $I \in \vb{t}(\rho')$ .
\end{enumerate}
\end{definition}
In the classical procedure of quasi-stationary state approximation, a subnetwork of a chemical reaction network is assumed to be in a stationary state. This amounts to considering the equilibrium solution for some species, and the overall reaction rate is determined in terms of them. Here, we would like to consider such an approximation scheme for the case of networks with product species. With this in mind, we define the quasi-stationary state approximation as follows.
\begin{definition}
An ordinary differential equation with the skew product structure
\[
	\begin{aligned}
		\dv{\vb{p}}{t} &= \vb{f}\qty(\vb{p})\\
		\dv{\vb{q}}{t} &= \vb{g}\qty(\vb{q}, \vb{p})
	\end{aligned}
\]
admits \textit{quasi-stationary state approximation (QSSA)} if there is $\vb{p}_*$ such that 
\[
	 \vb{f}\qty(\vb{p}_*) = 0.
\]
\end{definition}
\begin{remark}
It is usual for QSSA to assume slow-fast dichotomy of time scales \cite{segel1989quasi, stiefenhofer1998quasi, wechselberger2020geometric}. This distinction is not apparent in the treatment here, nor do we assume the steady state's stability. However, we may regard $\vb{p}$ as the fast variable in mind. 
\end{remark}
An important example is an ordinary differential equation with the skew product structure and an additive input term $\vb{k}$
\[
	\begin{aligned}
		\dv{\vb{p}}{t} &= \vb{f}\qty(\vb{p}) + \vb{k}\\
		\dv{\vb{q}}{t} &= \vb{g}\qty(\vb{q}, \vb{p})
	\end{aligned}
\]
admits QSSA if there is $\vb{p}_*$ such that 
\[
	 \vb{f}\qty(\vb{p}_*)+ \vb{k} = 0.
\]

If QSSA is applicable, the average effective kinetics of \ce{P} is given by $g(\vb{x}_0)$, where $\vb{f}(\vb{x}_0) = 0$.

Reaction networks with additive inputs may arise in the following way.
\begin{definition}
For a kinetic reaction network $(\mathcal{S}, \mathcal{R}, \vb{s}, \vb{t}, \mathcal{K})$, $\mathcal{F} \subset \mathcal{S}$ is \textbf{additive input} if
\begin{enumerate}
	\item  Each $F \in \mathcal{F}$ is a feeding species.
	\item For all $\rho \in \mathcal{R}$, $\mathcal{F}\cap \vb{s}(\rho) \neq \emptyset$ implies $\vb{s}(\rho) \subset \mathcal{F}$.
\end{enumerate}
In particular, $\mathcal{F}$ is \textbf{first-order input}  if $\mathcal{F}\cap \vb{s}(\rho) \neq \emptyset$ implies $\vb{s}(\rho) = F$ for some $F \in \mathcal{F}$.

A product species $P \in \mathcal{S}$ is a \textbf{first-order output} if $P \in \vb{t}(\rho)$ implies $\vb{t}(\rho) = P$.
\end{definition}

Then, it is easy to observe that the networks with additive inputs induce a non-autonomous equation with an additive input term if we control the concentration of input species. Conversely, equations of reaction networks with additive inputs can be regarded as that of a reaction network where first-order inputs are added.

The QSSA is applicable for a rather large class of networks with first-order inputs and outputs.
\begin{theorem}\label{thm_qssa}
Let $(\mathcal{S}, \mathcal{R}, \vb{s}, \vb{t}, \mathcal{K})$ be a mass-action reaction network and a species $P \in \mathcal{S}$ be first-order output. If the network satisfies the following conditions, then QSSA is applicable for the kinetics of $P$ when the concentration of feeding species is controlled:
\begin{enumerate}
	\item If a species $F$ never appears as a product, then $F$ is a first-order input.
	\item If a species $X$ never appears as a reactant, then $X$ is a first-order output.
	\item If reactions of the form \ce{0 ->[\alpha_i] F_i} are added and the outputs are substituted with zero complexes, the resulting network has deficiency zero and is weakly reversible.
\end{enumerate}
\end{theorem}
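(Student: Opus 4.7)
The plan is to reduce the intermediate–species dynamics, obtained after controlling the feeding species, to the mass-action ODE of the closed auxiliary network described in hypothesis (3), and then invoke the Feinberg Deficiency Zero Theorem to produce the equilibrium required by the QSSA definition.

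First, I would split $\mathcal{S}$ into feeding species $\mathcal{F}$, product species (containing $P$), and intermediate species, noting that conditions (1) and (2) ensure every species falls into exactly one of these categories (since any species that is neither a product nor a reactant of some reaction is, vacuously, both a first-order input and a first-order output, and can be discarded). With the feeding concentrations fixed, the equation for the non-feeding variables $\vb{p}$ takes the form $\dv{\vb{p}}{t} = \vb{f}(\vb{p}) + \vb{k}$, where $\vb{k}$ collects the now-constant contributions of reactions starting from feeding species. Because $\mathcal{F}$ is a first-order input, each such reaction has the form \ce{F_i -> \vb{t}(\rho)} with mass-action rate $k_\rho [F_i]$, so its effect on $\vb{p}$ is identical to that of a zero-order reaction \ce{0 -> \vb{t}(\rho)} with rate $\alpha = k_\rho [F_i]$. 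Because $P$ is a first-order output, each reaction producing a product species has target consisting solely of that product; in the equation for $\vb{p}$ this target contributes nothing, so replacing it by the zero complex leaves $\vb{f}(\vb{p}) + \vb{k}$ unchanged. Consequently, the intermediate dynamics coincides with the mass-action ODE of the modified network of hypothesis (3).

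Second, by hypothesis (3), the modified network is weakly reversible and has deficiency zero. The Deficiency Zero Theorem (see \cite{feinberg1987chemical, feinberg2019foundations}) then guarantees the existence of a strictly positive equilibrium in every stoichiometric compatibility class; pick any such $\vb{p}_*$. This gives $\vb{f}(\vb{p}_*) + \vb{k} = 0$, which, by the additive-input formulation of QSSA stated just before the theorem, is precisely the condition that QSSA be applicable. The average effective kinetics of $P$ is then $g(\vb{p}_*)$, where $g$ is the right-hand side of the equation for $[P]$ in the skew-product form.

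The main obstacle I expect is the first step: one has to check carefully that the substitutions (feeding sources and product targets both replaced by the zero complex) really yield the same vector field on the intermediate variables, and that these substitutions preserve the combinatorial data entering the deficiency count and weak reversibility. The rate-constant identification $\alpha_i = k_\rho [F_i]$ is where the mass-action hypothesis is used, and where one should verify that positivity of the feeding concentrations gives strictly positive $\alpha_i$, so that the modified network is a bona fide mass-action kinetic network to which Feinberg's theorem applies.
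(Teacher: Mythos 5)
Your overall strategy is the same as the paper's --- pass to the auxiliary network of hypothesis (3) and invoke the Deficiency Zero Theorem to produce the equilibrium demanded by the additive-input form of QSSA --- but there is a genuine gap in your reduction step. You claim that the controlled intermediate dynamics $\dot{\vb{p}} = \vb{f}(\vb{p}) + \vb{k}$ ``coincides with the mass-action ODE of the modified network of hypothesis (3).'' It does not: the network of hypothesis (3) is obtained by \emph{adding} the inflows $0 \to F_i$ while keeping the reactions $F_i \to \vb{t}(\rho)$, so the feeding species remain dynamical variables of that network and its ODE lives on a larger state space. The network you actually construct --- with each $F_i \to \vb{t}(\rho)$ \emph{replaced} by $0 \to \vb{t}(\rho)$ at rate $k_\rho [F_i]$ --- is a different network, and hypothesis (3) says nothing about its deficiency or weak reversibility, so the Deficiency Zero Theorem cannot be applied to it as you have set things up. (You flag this identification as the ``main obstacle,'' but the check you propose cannot succeed in the form stated, since the two networks do not even have the same species set.)

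The missing idea, which is the crux of the paper's proof, is how to transfer the equilibrium of the full auxiliary system back to the controlled setting. Writing the intake as $\dot{\vb{f}} = -K\vb{f} + \vb*{\alpha}$ with $K$ diagonal and invertible (this is where the first-order-input hypothesis is used), the auxiliary network admits a positive equilibrium $(\vb{f}_0(\vb*{\alpha}), \vb{x}_0(\vb*{\alpha}))$ for \emph{every} $\vb*{\alpha}$; at such an equilibrium $\vb{f}_0 = K^{-1}\vb*{\alpha}$, so as $\vb*{\alpha}$ ranges over all positive vectors the equilibrium feeding concentration $\vb{f}_0$ attains every prescribed positive value. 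Choosing $\vb*{\alpha} = K\vb{f}_0$ for the controlled concentration $\vb{f}_0$ and reading off the $\vb{x}$-component of the equilibrium condition gives $0 = \tilde{S}\vb{r}(\vb{x}_0) + K\vb{f}_0$, which is exactly the QSSA condition for the controlled system. Either supply this argument, or prove separately that your collapsed network (feeding reactants replaced by the zero complex) inherits weak reversibility and deficiency zero from the network of hypothesis (3); as written, neither step is present.
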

\begin{proof}
By assumptions, the equation of kinetics assumes the following form:
\[
	\begin{aligned}
		\dv{\vb{f}}{t} & =  - K\vb{f}\\
		\dv{\vb{p}}{t} & = B \vb{x}\\
		\dv{\vb{x}}{t} & = \tilde{S} \vb{r}\qty(\vb{x})+K \vb{f},\\
	\end{aligned}
\]
where $K$ is diagonal and invertible.
The equation of kinetics of the network in (3) is given by
\[
	\begin{aligned}
		\dv{\vb{f}}{t} & =  -K \vb{f}+ \vb{\alpha}\\
		\dv{\vb{x}}{t} & = \tilde{S} \vb{r}\qty(\vb{x})+K \vb{f}.\\
	\end{aligned}
\]
Since this modified network is weakly reversible and deficiency zero, it admits a positive equilibrium $\qty(\vb{f}_0(\vb{\alpha}),\vb{x}_0(\vb{\alpha}))$ for all $\vb{\alpha}$. As $\vb{\alpha}$ is arbitrary and $K$ is invertible, $\vb{f}_0$ is arbitrary. Therefore, for all $\vb{f}_0$, there exists $\vb{x}_0$ such that
\[
	0 = \tilde{S} \vb{r}\qty(\vb{x}_0)+K \vb{f}_0,
\]
which implies that the skew product system
\[
	\begin{aligned}
		\dv{\vb{p}}{t} & = B \vb{x}\\
		\dv{\vb{x}}{t} & = \tilde{S} \vb{r}\qty(\vb{x})+K \vb{f},\\
	\end{aligned}
\]
admits the QSSA.
\end{proof}

\begin{corollary}
Let $(\mathcal{S}, \mathcal{R}, \vb{s}, \vb{t}, \mathcal{K})$ be a mass-action reaction network, a species $P \in \mathcal{S}$ be first-order output, and a species $F \in \mathcal{S}$ be first-order input. If the network satisfies the following conditions, then QSSA is applicable  for the kinetics of $P$ under the controlled concentration of feeding species:
\begin{enumerate}
	\item If a species $F$ never appears as a product, then $F$ is a first-order input.
	\item If a species $X$ never appears as a reactant, then $X$ is a first-order output.
	\item Terminal strong-linkage classes consist of first-order outputs. For the network obtained by reversing all reactions, terminal strong-linkage classes consist of first-order inputs.
	\item If reactions of the form \ce{0 -> F_i} are added and the outputs are substituted with zero complexes, the resulting network has deficiency zero.
\end{enumerate}
\begin{proof}
By assumption (3), every complex $C$ ultimately reacts to a first-order output. By considering the reversed network, we see that every complex $C$ is contained in a path starting from a first-order input to a first-order output. Therefore, the modified network in (4) is weakly reversible.
\end{proof}
\end{corollary}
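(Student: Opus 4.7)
The plan is to reduce the corollary to Theorem \ref{thm_qssa}. Conditions (1) and (2) are identical to the first two hypotheses of that theorem, so the only new work is to derive from (3) and (4) the remaining hypothesis of Theorem \ref{thm_qssa}, namely that the modified network (add reactions \ce{0 -> F_i}, collapse each first-order output to the zero complex) is both weakly reversible and of deficiency zero. Deficiency zero is precisely condition (4), so the whole burden is to derive weak reversibility from the terminal strong-linkage hypothesis (3).

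To that end I would show that every complex of the modified network lies on a directed cycle through the zero complex $0$. Fix a complex $C$. For the forward leg, start at $C$ and follow any forward-directed walk in the original reaction graph; since the graph is finite, the walk must eventually enter a terminal strong-linkage class. By (3) this class is composed solely of first-order outputs, and each such output has been identified with $0$ by the substitution of (4); so the walk descends to a directed path from $C$ to $0$ in the modified network.

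For the backward leg, I would apply the mirror argument to the reversed network: a forward walk in the reversed graph starting at $C$ must enter one of its terminal strong-linkage classes, which by (3) consists of first-order inputs $F_i$. Reversing, this gives a directed path in the original network from some $F_i$ to $C$; prepending the added edge $0 \to F_i$ yields a directed path from $0$ to $C$ in the modified network. Concatenating the two legs gives the desired cycle, so the modified network is weakly reversible, and Theorem \ref{thm_qssa} supplies the QSSA for $P$.

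The only delicate point is the bookkeeping under the substitution: several first-order output singletons are merged with a single zero complex, and the source of each added \ce{0 -> F_i} reaction is the same vertex. This collapsing can only shorten the paths built above, never destroy them, so weak reversibility is preserved; any shift in deficiency caused by the collapsing is absorbed into the separately-assumed hypothesis (4). I expect the main obstacle to be making this collapsing argument entirely rigorous, in particular verifying that a terminal strong-linkage class containing a first-order output must in fact be a single vertex, which follows from the fact that a first-order output never appears on the source side of any reaction.
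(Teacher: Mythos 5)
Your proposal is correct and follows essentially the same route as the paper's own proof: reduce to Theorem \ref{thm_qssa}, note that (4) supplies deficiency zero, and establish weak reversibility by running every complex forward into a terminal strong-linkage class of first-order outputs (collapsed to the zero complex) and backward, via the reversed network, to a first-order input reachable from the added reaction \ce{0 -> F_i}. The paper states this in three sentences; your version merely makes the finite-graph and collapsing bookkeeping explicit.
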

We recall that a reaction network $(\mathcal{S}, \mathcal{R}, \vb{s}, \vb{t}, \mathcal{K})$ is \textbf{conservative} if the cokernel of the stoichiometric matrix contains a positive vector. For conservative networks, we may calculate the result of QSSA using flux balance analysis.
\begin{lemma}
Let $(\mathcal{S}, \mathcal{R}, \vb{s}, \vb{t}, \mathcal{K})$ be a kinetic reaction network and $P \in \mathcal{S}$ is an irreversible product of the network. If the network is conservative with conserved charge $\vb{\mu} = \qty(\vb{\mu}_0, \mu_P),$ the result of QSSA under an additive constant input $\vb{k}$ is given by 
\[
	\dv{[P]}{t} = \frac{\vb{\mu}_0}{\mu_P} \vb{k},
\]
provided QSSA is applicable.
\end{lemma}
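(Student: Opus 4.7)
The proof plan hinges on exploiting the conservation relation $\vb{\mu}^T S = 0$, where $S$ denotes the stoichiometric matrix of the autonomous (input-free) network. First, I would separate off the $P$-coordinate and rewrite this identity as
\[
\vb{\mu}_0 \cdot \vb{f}(\vb{x}) + \mu_P\, g(\vb{x}) = 0 \qquad \text{for all } \vb{x},
\]
where $\vb{f}$ and $g$ are the drift terms appearing in the skew-product form
\[
\dv{\vb{x}}{t} = \vb{f}(\vb{x}) + \vb{k}, \qquad \dv{[P]}{t} = g(\vb{x}).
\]
This step is immediate: because $P$ is an irreversible product, its column of the stoichiometric map contributes only to $g$, and partitioning the rows of $S$ according to whether they correspond to a non-$P$ species or to $P$ itself factors the kinetics $\vb{r}(\vb{x})$ out of the identity $\vb{\mu}^T S \vb{r}(\vb{x}) = 0$.

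Next, I would invoke the QSSA hypothesis for the perturbed system: by assumption, there exists $\vb{x}_*$ with $\vb{f}(\vb{x}_*) + \vb{k} = \vb{0}$, i.e.\ $\vb{f}(\vb{x}_*) = -\vb{k}$. Substituting into the pointwise conservation identity at $\vb{x}_*$ gives
\[
-\vb{\mu}_0 \cdot \vb{k} + \mu_P\, g(\vb{x}_*) = 0,
\]
so that the effective kinetics of $P$ at the quasi-stationary state is
\[
\dv{[P]}{t} \;=\; g(\vb{x}_*) \;=\; \frac{\vb{\mu}_0 \cdot \vb{k}}{\mu_P},
\]
which is the stated formula. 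Division by $\mu_P$ is legitimate because $\vb{\mu}$ is, by definition of conservativity, a strictly positive vector in the cokernel of $S$, so in particular $\mu_P > 0$.

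The argument is essentially bookkeeping, and I do not expect any genuine obstacle. The only point requiring attention is the separation between the autonomous conservation identity (which holds pointwise in $\vb{x}$ and is unaffected by the additive perturbation) and the stationarity equation for the perturbed system (which fixes where that identity is evaluated). Once this distinction is kept in mind, the result follows in one line. Notably, the conclusion does not depend on mass-action kinetics or on any hypothesis on the stability of $\vb{x}_*$ beyond its existence, which is exactly what the definition of QSSA in the paper requires.
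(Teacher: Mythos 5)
Your proposal is correct and matches the paper's own argument essentially line for line: both partition the stoichiometric matrix as $S = \bigl(\begin{smallmatrix}\tilde{S}\\ \vb{u}^T\end{smallmatrix}\bigr)$, use the conservation identity $\vb{\mu}_0^T\tilde{S}\vb{r}(\vb{x}) + \mu_P\,\vb{u}^T\vb{r}(\vb{x}) = 0$, and evaluate it at the QSSA point where $\tilde{S}\vb{r}(\vb{x}_*) = -\vb{k}$. Your added remark that $\mu_P > 0$ justifies the division is a small point the paper leaves implicit, but otherwise there is no difference in approach.
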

\begin{proof}
By assumptions, the kinetics of the network is governed by an equation of the form
\[
		\begin{aligned}
		\dv{\vb{x}}{t} &=\tilde{S}\vb{r}\qty(\vb{x}) + \vb{k}\\
		\dv{[P]}{t} &= g\qty(\vb{x}),
	\end{aligned}
\]
where $\vb{x}$ is the vector of the concentration of species other than $P$, $\tilde{S}$ is the corresponding stoichiometric matrix. We note that the total stoichiometric matrix $S$ of the network  is
\[
	S = \mqty(\tilde{S} \\ \vb{u}^T),
\]
where $\vb{u}$ is a non-negative vector. If QSSA is applicable, there is concentration profile $\vb{x}_0$ such that
\[
	\tilde{S}\vb{r}\qty(\vb{x}_0) + \vb{k} = 0.
\]
By the definition of conserved charge, we have
\[
	\vb{\mu}^T S \vb{r} \qty(\vb{x}_0)  = \tilde{\vb{\mu}}^T\tilde{S}\vb{r}\qty(\vb{x}_0)  + \mu_P \vb{u}^T \vb{r} \qty(\vb{x}_0) = 0.
\]
Noting that $g\qty(\vb{x}) = \vb{u}^T \vb{r} \qty(\vb{x}),$ we have
\[
	\dv{[P]}{t} = -  \frac{1}{\mu_P}\tilde{\vb{\mu}}^T\tilde{S}\vb{r}\qty(\vb{x}_0) = \frac{\tilde{\vb{\mu}}^T}{\mu_P} \vb{k}.
\]
\end{proof}

By specializing this result to the case where additive inputs occur naturally, we obtain the following result.
\begin{theorem}
Let $(\mathcal{S}, \mathcal{R}, \vb{s}, \vb{t}, \mathcal{K})$ be a kinetic reaction network and $P \in \mathcal{S}$ is an irreversible product of the network. If $\mathcal{F} = \{F_1, F_2, \cdots, F_{k}\}$ is  a set of first-order inputs and the network obtained by deleting $\mathcal{F}$ and reactions starting from $\mathcal{F}$ is conservative, the result of QSSA is given by first-order reactions:
\[
	\dv{[P]}{t} = R_1 [F_1] + R_2 [F_2] + \cdots + R_k [F_k],
\]
provided QSSA is applicable. In particular, if the total mass of the network except $\{F_1, F_2, \cdots, F_{k}\}$ is conserved, the production rate of $P$ coincides with the intake rate of $\mathcal{F}$ by the network.
\end{theorem}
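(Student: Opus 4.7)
The plan is to reduce the claim directly to the preceding lemma by rewriting the controlled system as a skew-product whose additive input is explicitly linear in $[F_1],\ldots,[F_k]$.

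First, I would write down the induced non-autonomous ODE explicitly. Because each $F_i \in \mathcal{F}$ is a first-order input, every reaction $\rho$ whose source intersects $\mathcal{F}$ satisfies $\vb{s}(\rho) = F_i$ for a single $F_i$, and under mass-action its rate is $k_\rho [F_i]$. Once $[F_1],\ldots,[F_k]$ are held constant, the equation on the species outside $\mathcal{F}$ splits as
\[
	\begin{aligned}
		\dv{\vb{x}}{t} &= \tilde{S}\,\vb{r}(\vb{x}) + \sum_{i=1}^{k} [F_i]\,\vb{v}_i,\\
		\dv{[P]}{t} &= g(\vb{x}),
	\end{aligned}
\]
where $\tilde{S}$ and $\vb{r}$ are the stoichiometric data of the reduced network (obtained by deleting $\mathcal{F}$ together with its outgoing reactions), and $\vb{v}_i := \sum_{\rho:\vb{s}(\rho)=F_i} k_\rho\,\pi\qty(\vb{t}(\rho))$, with $\pi$ projection onto coordinates outside $\mathcal{F}$. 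The crucial structural point is that all $\mathcal{F}$-sourced reactions collect into a single additive input $\vb{k} = \sum_i [F_i]\vb{v}_i$ to the reduced network.

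Next I would apply the preceding lemma to this skew-product. By hypothesis the reduced network is conservative; let $\vb{\mu} = (\tilde{\vb{\mu}}, \mu_P)$ be a positive conserved charge with $\mu_P$ the component for $P$. The lemma then yields, under the assumed QSSA,
\[
	\dv{[P]}{t} = \frac{\tilde{\vb{\mu}}^T}{\mu_P}\,\vb{k} = \sum_{i=1}^{k}\frac{\tilde{\vb{\mu}}^T \vb{v}_i}{\mu_P}\,[F_i],
\]
so setting $R_i := \tilde{\vb{\mu}}^T\vb{v}_i/\mu_P$ gives the asserted first-order rate law. For the last sentence I would specialize to $\tilde{\vb{\mu}} = (1,\ldots,1)$ and $\mu_P = 1$, which is permissible exactly when the total mass outside $\mathcal{F}$ is conserved. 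With this choice $\tilde{\vb{\mu}}^T\vb{k}$ is literally the total rate at which material flows from $\mathcal{F}$ into the rest of the network through the reactions starting at $\mathcal{F}$, i.e.\ the intake rate of $\mathcal{F}$; hence the production rate of $P$ coincides with this intake rate, matching the intuition that in the quasi-stationary regime everything entering the reduced network must eventually exit through the single irreversible sink $P$.

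The hard part will be the bookkeeping in the first step: one must confirm that the first-order input assumption prevents any coupling of $[F_i]$ with $\vb{x}$ inside the reaction rates (so the input truly appears additively and linearly in $\vb{k}$), and that the reactions remaining after the deletion are precisely those of the network assumed conservative, with the same stoichiometric matrix $\tilde{S}$. Once this identification is carried out, the theorem collapses to a single invocation of the preceding lemma followed by an inspection of the resulting formula.
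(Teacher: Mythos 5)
Your proposal is correct and follows essentially the same route as the paper: the paper's proof likewise reduces the theorem to the preceding lemma by viewing the deleted-$\mathcal{F}$ network as receiving the additive input $A\vb{f}$ linear in the concentrations of the first-order inputs. Your version merely makes explicit the bookkeeping (the vectors $\vb{v}_i$, the mass-action linearity of the intake, and the specialization $\tilde{\vb{\mu}}=(1,\ldots,1)$, $\mu_P=1$ for the mass-conservation case) that the paper leaves implicit.
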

\begin{proof}
In this case, we may apply the result to the network obtained by deleting $\mathcal{F}$ and reactions starting from $\mathcal{F}$ and input $A \vb{f}$, where $A$ is a matrix denoting the intake rate and $\vb{f}$ is the concentration vector of the species in $\mathcal{F}$. 
\end{proof}
\begin{corollary}
For a reaction network satisfying the assumptions of Theorem \ref{thm_qssa}, the result of QSSA is given by first-order reactions.
\end{corollary}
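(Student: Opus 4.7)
My plan is to derive the corollary directly from the immediately preceding theorem. That theorem asserts that the QSSA rate of $P$ has the first-order form $\dv{[P]}{t} = R_1[F_1] + \cdots + R_k[F_k]$ provided (i) $\mathcal{F}$ is a set of first-order inputs and (ii) the network obtained by deleting $\mathcal{F}$ together with the reactions starting from $\mathcal{F}$ is conservative. Hence the corollary reduces to verifying (i) and (ii) from the hypotheses of Theorem \ref{thm_qssa}.

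Hypothesis (i) is immediate. Assumption (1) of Theorem \ref{thm_qssa} states that every species that never appears as a product is a first-order input, and the feeding species are exactly such species, so $\mathcal{F}$ is a set of first-order inputs.

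The substantive step is (ii), and this is where I expect the main work to lie. My approach is to exploit assumption (3) of Theorem \ref{thm_qssa}: the augmented network obtained by adjoining $0 \to F_i$ for each feeding species and replacing every output species with the zero complex is weakly reversible and of deficiency zero. By the Feinberg--Horn--Jackson deficiency-zero theorem, this augmented network is complex balanced for every choice of mass-action rate constants. Since weak reversibility places every reaction on a cycle through $0$, the complex-balance relations can be chained along strong linkage classes, and I would use them to assign a positive mass $\mu_s$ to every intermediate and output species so that the mass of each reactant complex equals the mass of the corresponding product complex in the reduced subnetwork. The resulting $\vb{\mu}$ is the desired positive left-null vector of the reduced stoichiometric matrix, i.e., the conservation law required by the preceding theorem.

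With (i) and (ii) in hand, the preceding theorem supplies the first-order formula and the corollary follows. The principal obstacle is the explicit construction of $\vb{\mu}$ in step (ii). For pure first-order networks the uniform weight $\vb{\mu}=(1,\ldots,1)$ works by inspection, but in the general mass-action setting I will have to extract the weights from the complex-balanced equilibrium while tracking how stoichiometric mass is distributed across higher-order complexes; the key point making this possible is that weak reversibility prevents the inconsistent configurations (such as $X \to P$ coexisting with $X + Y \to P$) that would otherwise obstruct a positive conservation law.
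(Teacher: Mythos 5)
Your reduction of the corollary to the immediately preceding theorem is a sensible strategy, and step (i) is indeed immediate from assumption (1) of Theorem~\ref{thm_qssa}. The genuine gap is in step (ii), the claim that the reduced network is conservative. Conservativity is a structural property: a positive vector in the left kernel of the stoichiometric matrix. Complex balancing, which is what the deficiency-zero theorem gives for the augmented network, is a property of one equilibrium point: a family of identities among the reaction \emph{fluxes} $\mathcal{K}(\rho)(\vb{x}_0)$, not among stoichiometric coefficients. There is no general procedure for ``extracting the weights from the complex-balanced equilibrium,'' and your sketch never specifies what $\mu_s$ is; the requirement that ``the mass of each reactant complex equals the mass of the corresponding product complex'' merely restates the statement to be proved. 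As a caution, weak reversibility plus deficiency zero does not imply conservativity in general (the reversible pair \ce{0 -> A}, \ce{A -> 0} is weakly reversible, deficiency zero, and complex balanced, yet admits no positive conservation law), so any such argument must exploit the specific way the zero complex enters the augmented network; it is not even clear that the hypotheses of Theorem~\ref{thm_qssa}, which constrain only the complex graph and the deficiency, control the left kernel of the reduced stoichiometric matrix at all.

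A route that uses the hypotheses exactly as the proof of Theorem~\ref{thm_qssa} sets them up, and bypasses conservativity altogether, is to impose the complex-balance condition at the zero complex itself. At the complex-balanced equilibrium of the augmented network, the total flux leaving the complex $0$ is $\sum_i \alpha_i$, while the total flux entering $0$ is the sum of the rates of the former output reactions; because every output is first order, that sum is exactly the total production rate of the product species, and since $\alpha_i = \kappa_i [F_i]$ at the controlled concentrations (with $\kappa_i$ the diagonal entries of $K$), the QSSA production rate equals $\sum_i \kappa_i [F_i]$, which is first order. When $P$ is the unique product species this is precisely $\dv{[P]}{t}$; with several outputs this balance only controls their sum, and a further argument is needed to split it. I would either adopt this argument in place of step (ii) or give an explicit construction of the conserved charge rather than an appeal to complex balancing.
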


Thus, as far as the intake rate is first order, the effective reaction at the steady state appears to be first order regardless of detailed reaction mechanisms.

In the setting of the theorem, nonlinear coupling between species affects the applicability of QSSA rather than the linearity of the effective rate.

\begin{example}
Let us consider a network
\ce{F ->[k_1] X}, \ce{F ->[k_2] Y}, \ce{X + Y ->[l] P}. The kinetics has a skew product structure:
\[
\begin{aligned}
	\dv{[X]}{t}  &= -l [X] [Y] + k_1 [F]\\
	\dv{[Y]}{t}  &= -l [X] [Y] + k_2 [F]\\
	\dv{[P]}{t}  &= l [X] [Y].
\end{aligned}
\]
It is easy to observe that QSSA is not applicable if $k_1 \neq k_2$. If $k_1 = k_2$, the effective production rate of \ce{P} is $k_1 [F]$.
\end{example}

\section{Summary and conclusions}\label{conc}
In this article, we have considered the effective kinetics of a chemical reaction network defined as a conversion rate of feeding species into product species and studied its basic properties. As is observable from the comparison with QSSA, approximation or calculation of the rate of a reaction mechanism can be carried out within the framework of effective kinetics in a unified fashion. The discussion here mainly pertained to the problem of obtaining a well-defined reaction rate under the assumption that the concentration of the feeding species is kept constant. While we barely mentioned it here, it appears to be interesting to consider also the response of chemical reaction networks under varying inputs using the framework of non-autonomous dynamical systems. In particular, the existence of the rate will be closely related to that of an attractor, and in this way, we may obtain a clearer picture of how the output of a reaction system stabilizes.
\section*{Acknowledgements}
This work was supported by Kakenhi Grant-in-Aid for Scientific Research (22H05153, 23K19021) and RIKEN Incentive Research Grant. The author would like to express deep gratitude to Dr. Hideshi Ooka and Prof. Ryuhei Nakamura for their support and fruitful discussion. 

 \bibliography{effective_kinetics.org.tug} 
\bibliographystyle{junsrt} 

\end{document}